\documentclass[letterpaper, 10 pt, conference]{ieeeconf}  

\IEEEoverridecommandlockouts                              

\usepackage{amsmath,amsfonts}
\usepackage{algorithmic}
\usepackage{algorithm}
\usepackage{array}
\usepackage[caption=false,font=footnotesize,labelfont=rm,textfont=rm]{subfig}
\usepackage{textcomp}
\usepackage{stfloats}
\usepackage{url}
\usepackage{verbatim}
\usepackage{graphicx}
\usepackage{cite}
\usepackage{siunitx}
\usepackage{amssymb}
\usepackage{stfloats}
\usepackage{cases}

\newtheorem{theorem}{Theorem}
\newtheorem{lemma}{Lemma}

\newtheorem{definition}{Definition}
\newtheorem{remark}{Remark}
\newtheorem{assumption}{Assumption}

\title{\LARGE \bf
Deception in Reach-Avoid Game with Unknown Heterogeneous Attackers Speed Information
}

\author{Xiangkai Wu$^{1}$, Shaolin Tan$^{2}$, Wei Wang$^{3}$ and Zhen Han$^{4}$ 
\thanks{*This work was supported by National Natural Science Foundation of China under Grant 62373019, 
and Fundamental Research Fund for the Central Universities under Grant 501XSKC2025103001.
}
\thanks{$^{1}$Xiangkai Wu and Wei Wang are with the School of Automation Science and Electrical Engineering, Beihang University, Beijing 100191, China.
        {\tt\small xiangkaiwu@buaa.edu.cn; w.wang@buaa.edu.cn}}%
\thanks{$^{2}$Shaolin Tan is with the Zhongguancun Laboratory, Beijing 100094, China.
        {\tt\small shaolintan@hnu.edu.cn}}%
\thanks{$^{3}$Wei Wang is also with the Hangzhou Innovation Institute, Beihang University, Hangzhou 310051, China. (Corresponding author)}
\thanks{$^{4}$Zhen Han is with China North Vehicle Research Institute, Beijing 100072, China. {\tt\small hanzhen2018@buaa.edu.cn}}}

\begin{document}

\maketitle
\thispagestyle{empty}
\pagestyle{empty}

\begin{abstract}

This letter investigates a reach-avoid game involving two Attackers and one Defender, where the Attackers aim to maximize the number reaching the target region while the Defender seeks to minimize it. In contrast to conventional complete information formulations, we consider an information asymmetry scenario where the Attackers' heterogeneous maximum speeds are privately known but publicly disclosed to lie within continuous ranges. Existing studies on uncertain speeds, however, have primarily focused on homogeneous settings, whereas heterogeneity extends the uncertainty from a common capability level to the relative capability configuration of the Attackers. To address the resulting capture-order ambiguity over infinitely many possible speed combinations, we establish a critical speed pair framework that characterizes when different capability configurations induce different optimal capture orders, and enables the analysis of the Defender's guessing behavior and the design of information-limiting strategies for the Attackers. We demonstrate that under certain initial conditions, the Attackers can mislead the Defender into making suboptimal decisions through a slow-speed deception strategy, achieving superior payoffs compared to the complete information game. Numerical visualizations reveal the widespread occurrence of such dilemma conditions.

\end{abstract}

\section{Introduction}\label{sec:introduction}

Reach-avoid game, also known as the target-defending game or perimeter defense game \cite{shishika2020review,yan2023multiplayer}, is a differential game where the attackers attempt to reach a target area without being captured while defenders seek to intercept them before arrival. This problem has significant applications in ground and air combat \cite{deng2023reach,garcia2020multiple}, and autonomous driving \cite{exarchos2015suicidal}, attracting considerable research attention.

Existing studies on reach-avoid game variants \cite{liang2020analysis,yan2024multiplayer,shishika2018local,fu2023justification,lee2024solutions} commonly conclude that all players move at maximum capacity toward specific fixed points to optimize payoffs, leveraging positional or performance advantages. This conclusion relies on the assumption of complete information, where state information, payoff functions, and dynamic parameters are common knowledge. However, incomplete information on state \cite{shishika2021partial}, payoffs \cite[Chapter 5.1]{oyler2016contributions}, or parameters \cite{nath2022two} is more prevalent in practice.

Unlike complete information differential games, incomplete information variants lack mature methods and exhibit unique phenomena such as deception and inference \cite{dragan2015deceptive}. 
Against this backdrop, leveraging information advantages via strategic deception to offset positional or performance disadvantages constitutes a challenging yet valuable research direction.
However, exploration into how information advantages of key parameters influence game outcomes in differential games remains limited.
Reference \cite{shishika2024deception} pioneers the investigation of this issue by considering attackers with binary maximum speeds ($v_{\text{slow}}$ or $v_{\text{fast}}$) in target defending games, demonstrating that attackers can induce the defender to make costly misjudgments via an information-limiting strategy. References \cite{wu2025deception} and \cite{wu2026inducing} extend this work by adopting a critical speed framework where the defender only knows that the attackers' maximum speed exceeds a known lower bound.

However, previous studies \cite{shishika2024deception,wu2025deception} investigate only one-dimensional rotating turret defender, neglecting the simple-motion defender whose two-dimensional strategy space complicates attackers' exploitation of information advantages. In addition, works addressing simple motion defender \cite{garcia2019strategies,yan2021cooperative,deng2023reach} rarely incorporate information advantages.
Moreover, the works \cite{shishika2024deception,wu2025deception,wu2026inducing} assume attackers with homogeneous capabilities, which oversimplifies practical heterogeneous scenarios and leaves the strategic impact of relative capability configurations unexplored.

This letter addresses these limitations by considering a two-Attackers one-Defender (2A1D) reach-avoid game where Attackers have heterogeneous, unknown maximum speeds within continuous ranges. 
The main contributions of this letter are: (i) formulating a reach-avoid game under incomplete information where Attackers' heterogeneous maximum speeds are publicly disclosed to fall within continuous ranges; (ii) proposing a critical speed pair framework to characterize when different heterogeneous speed configurations induce opposite optimal capture orders and to analyze the Defender's guessing behavior; and (iii) characterizing dilemma conditions under which Attackers can conceal their true capability configuration through an information-limiting strategy for improved payoff.

The letter is organized as follows. The problem is formulated in Section \ref{section:problem_formulation}, following preliminaries in Section \ref{section:chapter_complete_information}.  The concepts of critical speed pairs are introduced in Section \ref{section:critical_speed}. The main results are presented in Section \ref{section:main_result} with the visualization of dilemma conditions in Section \ref{section:visualization}.

\textit{Notations}. 
The unit vector with direction of nonzero vector $\mathbf{a}\in\mathbb{R}^n$ is denoted $\langle \mathbf{a}\rangle=\frac{\mathbf{a}}{\|\mathbf{a}\|}$.
For real numbers $a, b, c, d$, notation $(a,b)\succ (c,d)$ means $a\ge c,b\ge d$ with at least one inequality strict. Similarly, $(a,b)\neq(c,d)$ means at least one of $a\neq c$ or $b\neq d$ holds.

\section{Problem Formulation}\label{section:problem_formulation}
Consider a reach-avoid game in Euclidean plane $\mathbb{R}^2$, involving two Attackers ($A_1, A_2$) and one Defender ($D$), as illustrated in Fig. \ref{fig:scenario}.
The Attackers seek to reach the target region $\mathcal{T}=\{[x,y]\in\mathbb{R}^2|y\le0\}$ without capture, while the Defender aims to intercept them before they reach the target.
\begin{figure}[htbp]
        \centering
        \includegraphics[width=0.3\textwidth]{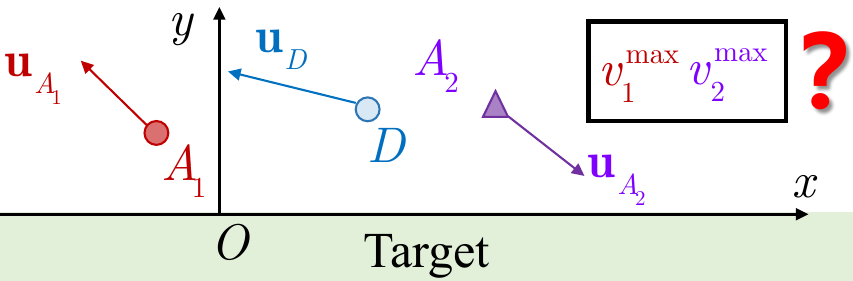}
        \caption{The 2A1D reach-avoid game scenario.
        }
        \label{fig:scenario}
\end{figure}

Let $\mathbf{x}_{A_i}=[x_{A_i},y_{A_i}],\mathbf{x}_D=[x_D,y_D]$ denote the positions of Attackers $A_i,i=1,2$, and the Defender $D$, respectively.
Following simple motion models \cite{isaacs1999differential}, the kinematic equations are
\begin{equation}
        \dot{\mathbf{x}}_{A_i}= \mathbf{u}_{A_i},i=1,2, \dot{\mathbf{x}}_D= \mathbf{u}_D. \label{dynamics} \\
\end{equation}
where $\mathbf{u}_{A_i}\in\{\mathbf{u}\in\mathbb{R}^2|\|\mathbf{u}\|\le v_i^{\text{max}}\}$, and $\mathbf{u}_D\in\{\mathbf{u}\in\mathbb{R}^2|\|\mathbf{u}\|\le1\}$ are the control inputs. Here, $v_i^{\text{max}}>0,i=1,2$ represent Attacker $A_i$'s maximum speed, while the Defender's maximum speed is normalized to unity. The game state is $\mathbf{x}=[\mathbf{x}_{A_1},\mathbf{x}_{A_2},\mathbf{x}_{D}]$.

Attacker $A_i$ is captured by the Defender before reaching the target boundary if $ \exists t:\mathbf{x}_{A_i}(t)=\mathbf{x}_{D}(t)$ and $\mathbf{x}_{A_i}(\tau)\notin\mathcal{T}, \forall \tau\in[0,t]$. Conversely, $A_i$ successfully reaches the target if $ \exists t:\mathbf{x}_{A_i}(t)\in\mathcal{T}$, and $\mathbf{x}_{A_i}(\tau)\neq\mathbf{x}_{D}(\tau),\forall \tau\in[0,t]$. Once an Attacker is captured or reaches the target at time $t_{A_i}$, it is removed from the game. The game continues until it terminates at $t_T=\max\{t_{A_1},t_{A_2}\}$.

The payoff function, representing the number of Attackers captured, is defined as follows.
\begin{equation}
        J(\mathbf{x}_0;\gamma_A,\gamma_D) = \left\{\begin{array}{cl}
        0 & \text{if no Attacker is captured,} \\
        1 & \text{if one Attacker is captured,} \\
        2 & \text{if two Attackers are captured.}
        \end{array}\label{payoff_function}\right.
\end{equation}
where $\mathbf{x}_0=[\mathbf{x}^0_{A_1},\mathbf{x}^0_{A_2},\mathbf{x}^0_{D}]$ is the initial state, 
$\gamma_A=\{\mathbf{u}_{A_1},\mathbf{u}_{A_2}\},\gamma_D=\{\mathbf{u}_D\}$ are the Attackers' and Defender's strategies, respectively.
The Attackers aim to minimize while the Defender seeks to maximize this payoff. Given the game state $\mathbf{x}$ and the Attackers' maximum speeds, the value of the game is denoted by 
\begin{equation}
        J^*(\mathbf{x},v_1^{\text{max}},v_2^{\text{max}})\triangleq\min_{\gamma_A}\max_{\gamma_D}J(\mathbf{x},\gamma_A,\gamma_D).\label{optimal_payoff}
\end{equation}

The state feedback information structure, under which all players make their decisions only based on the instantaneous game state $\mathbf{x}(t)$, is adopted. We make the following assumptions regarding Attackers' speed information.

\begin{assumption}\label{assumption:maximum speed}
The maximum speeds $v_i^{\text{max}}$ of the Attackers are private to the Attackers. However, it is common knowledge that $v_i^{\text{max}}\geq v_i^{\text{pub}}$, where $v_i^{\text{pub}}\in(0,1)$ are publicly known lower bounds.
\end{assumption}
\begin{assumption}\label{assumption:observe}
        The Defender can observe the accurate historical speed of the Attackers.
\end{assumption}

In this letter, we seek to address the following question: under what conditions can Attackers enhance their payoffs by creating dilemmas via deceptive strategy, while Defender is forced to make risky guesses?

\section{Preliminaries} \label{section:chapter_complete_information}
This section introduces the barriers of the reach-avoid game under complete information, where the Attackers' maximum speed $v_i^{\text{max}}<1$ is common knowledge among all players. 

\subsection{Barrier of one-Attacker one-Defender (1A1D) game}

The barrier of the 1A1D game is the surface separating the winning regions of the Attacker and Defender \cite{isaacs1999differential,yan2023multiplayer}. Here, the winning region of each player is a set of states from which that player can guarantee to win regardless of the opponent's strategy. 

Let the state in 1A1D game be $\mathbf{z}=[\mathbf{x}_A,\mathbf{x}_D]$, where $\mathbf{x}_A$ and $\mathbf{x}_D$ represent the positions of the Attacker and the Defender, respectively. Define the function $Y(\mathbf{z},v)\triangleq\frac{y_A-v^2y_D-v\|\mathbf{x}_A-\mathbf{x}_D\|}{1-v^2}$. 

\begin{lemma}[\cite{garcia2019strategies}]\label{lemma:barrier_one}
Given $v^{\text{max}}\in(0,1)$, the 1A1D game barrier is characterized by
        \begin{equation}
                \mathcal{B}_1\triangleq\{\mathbf{z}|Y(\mathbf{z},v^{\text{max}})=0\},\label{full_barrier_one_Attacker}\\
        \end{equation}
with the Attacker's winning region as $\mathcal{R}^0_{1}\triangleq\{\mathbf{z}|Y(\mathbf{z},v^{\text{max}})<0\}$, and the Defender's winning region as $\mathcal{R}^1_{1}\triangleq\{Y(\mathbf{z},v^{\text{max}})>0\}$.
Moreover, when players are positioned on the barrier $\mathcal{B}_1$, their optimal strategies are
\begin{align}
        \mathbf{u}^*_D&=\boldsymbol\eta_D(\mathbf{z},v^{\text{max}}), \label{optimal_strategy_one_Defender_barrier}\\
        \mathbf{u}^*_A&=v^{\text{max}}\boldsymbol\eta_A(\mathbf{z},v^{\text{max}}).\label{optimal_strategy_one_Attacker_barrier}
\end{align}
where $\boldsymbol\eta_M(\mathbf{z},v)\triangleq\langle \frac{\mathbf{x}_A-v^2\mathbf{x}_D}{1-v^2}-\frac{v\|\mathbf{x}_A-\mathbf{x}_D\|}{1-v^2}\cdot[0,1]-\mathbf{x}_M\rangle$ for $M=A,D$.
\end{lemma}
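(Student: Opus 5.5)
The plan is the classical Apollonius-circle argument for simple-motion pursuit against a half-plane target. Fix $v=v^{\text{max}}\in(0,1)$. The Apollonius circle of the pair is the set of points the Attacker can reach strictly before the Defender:
\[
\{\mathbf{p}:\|\mathbf{p}-\mathbf{x}_A\|<v\|\mathbf{p}-\mathbf{x}_D\|\}.
\]
Completing the square shows it is a disk with center
\[
\mathbf{c}=\frac{\mathbf{x}_A-v^2\mathbf{x}_D}{1-v^2}
\]
and radius
\[
r=\frac{v\|\mathbf{x}_A-\mathbf{x}_D\|}{1-v^2}.
\]
Hence the lowest point of the disk is $\mathbf{p}^*=\mathbf{c}-r[0,1]$, and its $y$-coordinate is exactly $Y(\mathbf{z},v)$. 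The sign of $Y$ therefore says whether the Apollonius disk meets the open target $\{y<0\}$. I would also note that $\boldsymbol\eta_M$ is the unit vector from $\mathbf{x}_M$ toward $\mathbf{p}^*$.

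\textbf{Attacker's region.} Suppose $Y<0$. Then $\mathbf{p}^*$ lies in the interior of the Apollonius disk's intersection with $\mathcal{T}$. Moving straight toward a point of $\{y<0\}$ inside the disk, at full speed, the Attacker reaches it at time $t_A$. The Defender, whatever it does, needs time at least $\|\mathbf{p}-\mathbf{x}_D\|>t_A$ to get there. Capture at an earlier point $\mathbf{q}$ on the Attacker's segment would need $\|\mathbf{q}-\mathbf{x}_D\|\le\|\mathbf{q}-\mathbf{x}_A\|/v$. The Apollonius disk is convex and contains both $\mathbf{x}_A$ and $\mathbf{p}$, so the whole segment lies inside it, where the reverse strict inequality holds. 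So capture is impossible.

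Actually the Attacker only needs to cross $y=0$. It is enough to aim at the boundary point of the disk's chord on $y=0$, or simply at $\mathbf{p}^*$, which crosses $y=0$ inside the disk.

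\textbf{Defender's region.} Suppose $Y>0$. I would use the standard semipermeability/monotonicity argument: the Defender plays $\mathbf{u}_D=\boldsymbol\eta_D(\mathbf{z},v)$, i.e., heads toward the current $\mathbf{p}^*$. I then compute $\frac{d}{dt}Y$ along trajectories and show that
\[
\min_{\|\mathbf{u}_A\|\le v}\,\max_{\|\mathbf{u}_D\|\le1}\,\dot Y=0.
\]
This reduces to computing $\nabla_{\mathbf{x}_A}Y$ and $\nabla_{\mathbf{x}_D}Y$. We have $\nabla_{\mathbf{x}_A}Y=(e_2-v\langle\mathbf{x}_A-\mathbf{x}_D\rangle)/(1-v^2)$, and similarly for $D$. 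The step to verify is that their norms have ratio $v:1$ and that they point along $-(\mathbf{p}^*-\mathbf{x}_A)$ and $+(\mathbf{p}^*-\mathbf{x}_D)$ respectively (the latter with the sign making the Defender increase $Y$). This identity, that $\|\nabla_{\mathbf{x}_A}Y\|=v\|\nabla_{\mathbf{x}_D}Y\|$ with the right directions, is the main computational obstacle. I would verify it using the fact that $\mathbf{p}^*$ lies on the Apollonius circle, so $\|\mathbf{p}^*-\mathbf{x}_A\|=v\|\mathbf{p}^*-\mathbf{x}_D\|$.

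Given this, the Hamiltonian vanishes at the saddle point $\mathbf{u}_A=v\boldsymbol\eta_A$, $\mathbf{u}_D=\boldsymbol\eta_D$. So under the Defender's strategy $Y$ is nondecreasing, $Y\ge Y_0>0$, and the Apollonius disk never reaches $\{y\le0\}$. Since the Attacker's current position lies in its own Apollonius disk, it never enters $\mathcal{T}$. Meanwhile $\|\mathbf{x}_A-\mathbf{x}_D\|$ decreases, because moving toward a fixed point on the circle closes distance at rate at least $1-v$. Hence capture occurs in finite time.

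\textbf{Barrier.} Combining the two cases, $\{Y=0\}$ separates the regions and is $\mathcal{B}_1$. On $\mathcal{B}_1$ the saddle-point controls computed above are optimal, with both players heading to $\mathbf{p}^*$, which lies on $y=0$. This gives (\ref{optimal_strategy_one_Defender_barrier})–(\ref{optimal_strategy_one_Attacker_barrier}).
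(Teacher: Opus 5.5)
The paper gives no proof of this lemma; it is imported from \cite{garcia2019strategies}. Your route is the standard one behind it: the lowest point $\mathbf{p}^*$ of the Apollonius disk, a vanishing Hamiltonian, and monotonicity of $Y$ under $\mathbf{u}_D=\boldsymbol\eta_D$. The Attacker-region argument works once you aim at $\mathbf{p}^*$. Note that $\mathbf{p}^*$ lies \emph{on} the circle, not inside it, but the half-open segment $[\mathbf{x}_A,\mathbf{p}^*)$ lies in the open disk and crosses $y=0$.

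\textbf{The gap: termination in the Defender's region.} Under $\mathbf{u}_D=\boldsymbol\eta_D$, the point $\mathbf{p}^*$ stays fixed only while the Attacker plays $v\boldsymbol\eta_A$. Otherwise it moves, and $\|\mathbf{x}_A-\mathbf{x}_D\|$ need not decrease at all. Take $\mathbf{w}\triangleq\langle\mathbf{x}_A-\mathbf{x}_D\rangle$ and $\mathbf{e}_2\triangleq[0,1]$, so that $\boldsymbol\eta_D=\langle\mathbf{w}-v\mathbf{e}_2\rangle$. At a bearing with $w_y=v$ this vector is horizontal and $\mathbf{w}\cdot\boldsymbol\eta_D=\sqrt{1-v^2}$. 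An Attacker fleeing along $\mathbf{w}$ then gets $\frac{d}{dt}\|\mathbf{x}_A-\mathbf{x}_D\|=v-\sqrt{1-v^2}>0$ whenever $v>1/\sqrt{2}$. Monotonicity of $Y$ only shows the Attacker never enters $\mathcal{T}$. The Defender's win, and the payoff, require an actual capture, so finite-time capture is left unproved.

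\textbf{A potential that works.} Use $G\triangleq\|\mathbf{x}_A-\mathbf{x}_D\|-v(y_A-y_D)=\frac{1-v^2}{v}(y_A-Y)$, a scaled vertical gap between the Attacker and the bottom of its Apollonius circle. It satisfies $(1-v)\|\mathbf{x}_A-\mathbf{x}_D\|\le G\le(1+v)\|\mathbf{x}_A-\mathbf{x}_D\|$. Since $\nabla_{\mathbf{x}_A}G=-\nabla_{\mathbf{x}_D}G=\mathbf{w}-v\mathbf{e}_2$, every admissible $\mathbf{u}_A$ gives $\dot G=(\mathbf{w}-v\mathbf{e}_2)\cdot(\mathbf{u}_A-\boldsymbol\eta_D)\le-(1-v)\|\mathbf{w}-v\mathbf{e}_2\|\le-(1-v)^2$. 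Hence capture occurs by time $G(0)/(1-v)^2$, while $Y\ge Y(0)>0$ throughout.

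\textbf{A reversed identity.} From your own gradients, $\|\nabla_{\mathbf{x}_D}Y\|=v\|\nabla_{\mathbf{x}_A}Y\|$, because $\|\mathbf{e}_2-v\mathbf{w}\|=\|\mathbf{w}-v\mathbf{e}_2\|$ (both square to $1-2vw_y+v^2$). This is what makes $\|\nabla_{\mathbf{x}_D}Y\|-v\|\nabla_{\mathbf{x}_A}Y\|=0$. With the relation as you wrote it, the Hamiltonian would be $(1-v^2)\|\nabla_{\mathbf{x}_D}Y\|\neq0$.
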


\subsection{Barriers of 2A1D game}\label{section:complete_information_two_attackers}
Given that the Defender captures the Attacker $A_i,i=1,2$ first, we consider the barrier in 2A1D game that separates two distinct state spaces: one from which the Defender can guarantee capturing both Attackers, and another from which at least one Attacker can guarantee reaching the target.
Let $\mathbf{z}_i\triangleq[\mathbf{x}_{A_i},\mathbf{x}_D]$ represent the state vector containing positions of Attacker $A_i$ and the Defender $D$ for $i=1,2$.
Given the capture order, the game state $\mathbf{x}$ and the Attackers' maximum speeds $v_i, v_j$, consider the lowest $y$-coordinate that the Attacker $A_j,j=\{1,2\}\setminus \{i\}$ can potentially reach \cite{deng2023reach}
\begin{equation}
        \hat J_i(\mathbf{x},v_i,v_j)\triangleq\min_{\theta\in(-\pi,\pi]\atop \phi\in(-\pi,0)}Y([\hat{\mathbf{x}}_i(\mathbf{x},v_i,v_j,\phi), \tilde{\mathbf{x}}(\mathbf{z}_i,v_i,\theta)],v_j)\label{y_minimum_2}
\end{equation}
where $\tilde{\mathbf{x}}(\mathbf{z}_i,v_i,\theta)\triangleq \frac{\mathbf{x}_{A_i}-v_i^2\mathbf{x}_D}{1-v_i^2}-\frac{v_i\|\mathbf{x}_{A_i}-\mathbf{x}_D\|}{1-v_i^2}\cdot[\cos\theta,\sin\theta]$, $\hat{\mathbf{x}}_i(\mathbf{x},v_i, v_j,\phi)\triangleq \mathbf{x}_{A_j}+v_j\|\tilde{\mathbf{x}}(\mathbf{z}_i,v_i,\theta)-\mathbf{x}_D\|\cdot[\cos\phi,\sin\phi]$.

There may exist multiple angle pairs that minimize \eqref{y_minimum_2} \cite{deng2023reach}. We assume players can compute the minimum value and corresponding angle pairs, denoted by $\theta^i_l(\mathbf{x},v_i,v_j),\phi^i_l(\mathbf{x},v_i,v_j),l=1,\cdots,m_i,m_i\ge1$, using established computational approaches\cite{garcia2019strategies,liu2013evasion}. 
\begin{lemma}[\cite{yan2021cooperative,deng2023reach}]
        Given that the Defender captures the Attacker $A_i$ first, the barrier for 2A1D game given the capture order is $\mathcal{B}_{2,i}\triangleq\mathcal{\bar B}_{2,i}\cup\mathcal{\hat B}_{2,i}$, where 
        \begin{align}
                \mathcal{\bar B}_{2,i}\triangleq\{&\mathbf{x}|\mathbf{z}_i\in\mathcal{R}_1^1, \hat J_i(\mathbf{x},v_i^{\text{max}},v_j^{\text{max}})=0\},\label{barrier_two_1_2}\\
                \mathcal{\hat B}_{2,i}\triangleq\{&\mathbf{x}|\mathbf{z}_i\in\mathcal{B}_1, \hat J_i(\mathbf{x},v_i^{\text{max}},v_j^{\text{max}})>0\},i=1,2.\label{barrier_two_1_2_2}
        \end{align}
        Additionally, if $\mathbf{x}\in\mathcal{\bar B}_{2,i}$, the optimal strategies of all players are 
        \begin{align}
                \mathbf{u}_{A_i}&=v_i^{\text{max}}\boldsymbol\zeta^{i,l}_{A_i}(\mathbf{x},v_i^{\text{max}},v_j^{\text{max}})\label{optimal_a1_first_phase} \\ 
                \mathbf{u}_{A_j}&=v_j^{\text{max}}\boldsymbol\xi_{A_j}^{i,l}(\mathbf{x},v_i^{\text{max}},v_j^{\text{max}}) \label{optimal_a2_first_phase}\\
                \mathbf{u}_D&=\boldsymbol\zeta^{i,l}_D(\mathbf{x},v_i^{\text{max}},v_j^{\text{max}}),l=1,\cdots,m_i\label{optimal_d_first_phase}
        \end{align}
        where $\boldsymbol\zeta^{i,l}_M(\mathbf{x},v_i,v_j)\triangleq\langle\tilde{\mathbf{x}}(\mathbf{z}_i,v_i,\theta^i_l(\mathbf{x},v_i))-\mathbf{x}_M\rangle$ for $M=A_i,D$, and $\boldsymbol\xi_{A_j}^{i,l}(\mathbf{x},v_i,v_j)\triangleq\langle\hat{\mathbf{x}}_i(\mathbf{x},v_i,v_j,\phi^i_l(\mathbf{x},v_i,v_j))-\mathbf{x}_{A_j} \rangle$.
\end{lemma}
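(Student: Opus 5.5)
The approach is to reduce the 2A1D game with a fixed capture order to two nested 1A1D problems and invoke Lemma~\ref{lemma:barrier_one} at each stage. Fix the order in which $A_i$ is captured first. The first step is to describe the set of points where $D$ can capture $A_i$. By Lemma~\ref{lemma:barrier_one}, when $\mathbf{z}_i$ is on or inside the Defender's winning region $\mathcal{R}_1^1$, the points $P$ with $\|P-\mathbf{x}_{A_i}\| \le v_i\|P-\mathbf{x}_D\|$ form the Apollonius disk. Its center is $\frac{\mathbf{x}_{A_i}-v_i^2\mathbf{x}_D}{1-v_i^2}$ and its radius is $\frac{v_i\|\mathbf{x}_{A_i}-\mathbf{x}_D\|}{1-v_i^2}$. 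The boundary points of this disk are exactly $\tilde{\mathbf{x}}(\mathbf{z}_i,v_i,\theta)$ for $\theta\in(-\pi,\pi]$. Capturing $A_i$ at such a point takes time $\|\tilde{\mathbf{x}}-\mathbf{x}_D\|$. During that time $A_j$ can reach any point of the disk of radius $v_j\|\tilde{\mathbf{x}}-\mathbf{x}_D\|$ around $\mathbf{x}_{A_j}$. The boundary points of that disk are $\hat{\mathbf{x}}_i(\cdot,\phi)$, and restricting to $\phi\in(-\pi,0)$ keeps the useful, downward-pointing ones.

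The second step is a max--min argument for the continuation game. After $A_i$ is captured at $\tilde{\mathbf{x}}$, the remaining game is 1A1D with state $[\hat{\mathbf{x}}_i,\tilde{\mathbf{x}}]$. Its outcome is decided by the sign of $Y$ (Lemma~\ref{lemma:barrier_one}). The attacker team picks $\theta$ through $A_i$'s heading and picks $\phi$ through $A_j$'s heading. I would show that $A_j$ does best by moving straight toward the point that minimizes $Y$. The reason is that $Y$ is monotone in $y_{A}$ and in distance, so interior points of the reachable disk are dominated by boundary points. I would also show that $D$ cannot do better than heading to the capture point, since any other path to intercept $A_i$ takes longer and so enlarges $A_j$'s reachable disk. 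Together these give that both Attackers can be captured exactly when $\hat J_i>0$, at least one Attacker escapes exactly when $\hat J_i<0$, and $\hat J_i=0$ is the separating set. Combining this with the requirement $\mathbf{z}_i\in\mathcal{R}_1^1$ gives $\bar{\mathcal{B}}_{2,i}$.

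The third step handles the other boundary component. If $\mathbf{z}_i\in\mathcal{B}_1$, the Apollonius disk touches $y=0$, so capturing $A_i$ first is only marginally feasible. If $\hat J_i>0$ there, the boundary between ``capture both'' and ``at least one escapes'' is set by the 1A1D barrier of $A_i$, which gives $\hat{\mathcal{B}}_{2,i}$. Finally, the optimal strategies \eqref{optimal_a1_first_phase}--\eqref{optimal_d_first_phase} are obtained by having every player head straight to the minimizing points $\tilde{\mathbf{x}}(\theta^i_l)$ and $\hat{\mathbf{x}}_i(\phi^i_l)$. One then checks that $\hat J_i$ stays constant along these trajectories, which is the semipermeability condition.

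The main obstacle is the second step, namely showing that the straight-line, commit-to-a-point strategies form a saddle point. A feedback Defender might hope to gain by deviating, and $A_i$ might hope to gain by not running to the boundary of the Apollonius disk. I would first try a Lyapunov-type argument. The plan is to show that under the Attackers' strategy $\hat J_i$ does not increase for any Defender control, and that under the Defender's strategy it does not decrease for any Attacker control. This would use the envelope theorem on the min over $(\theta,\phi)$, followed by direct differentiation of the Apollonius center and radius. Multiple minimizers $l=1,\dots,m_i$ make the min nonsmooth, and I would handle them with one-sided directional derivatives.
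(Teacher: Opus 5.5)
The paper does not prove this lemma. It imports it from \cite{yan2021cooperative,deng2023reach}. The only traces of an argument are the auxiliary-game sketch in the proof of Lemma~\ref{lemma_L_decrease}, which is essentially your Steps~1--2, and the gradient directions quoted in the proof of Lemma~\ref{lemma:rate_lambda_i}. So your plan has to stand on its own, and as written it secures only the Attackers' half. Your commit-to-a-point reasoning is valid there. If $A_i$ runs straight to $\tilde{\mathbf{x}}^*$ and waits, every other point of that chord lies strictly inside the Apollonius disk $\mathcal{A}_i$. The capture location is therefore pinned, and the Defender can only lose time. That reasoning says nothing against an $A_i$ that does not commit, so ``both are captured when $\hat J_i>0$'' rests entirely on your Lyapunov step.

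That step is missing the identity that makes it close. Write the minimization over $\theta$ as minimizing $F(P)\triangleq\min_{\phi}Y([\mathbf{x}_{A_j}+v_j\|P-\mathbf{x}_D\|[\cos\phi,\sin\phi],P],v_j)$ subject to $g(P)\triangleq\|P-\mathbf{x}_{A_i}\|-v_i\|P-\mathbf{x}_D\|=0$, with multiplier $\kappa$. The envelope theorem gives $\partial_{\mathbf{x}_{A_i}}\hat J_i=-\kappa\boldsymbol\zeta_{A_i}$, $\partial_{\mathbf{x}_{A_j}}\hat J_i=-c\,\boldsymbol\xi_{A_j}$ and $\partial_{\mathbf{x}_D}\hat J_i=(cv_j+\kappa v_i)\boldsymbol\zeta_D$. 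Here $c>0$ is the norm of $\nabla_{\mathbf{x}_A}Y([\hat{\mathbf{x}}_i,\tilde{\mathbf{x}}],v_j)$. Hence $\min_{\mathbf{u}_A}\max_{\mathbf{u}_D}\dot{\hat J}_i=|cv_j+\kappa v_i|-v_i|\kappa|-v_jc$. This is $0$ when $\kappa\ge0$ and strictly negative when $\kappa<0$, and the Defender's half of your argument is precisely the claim that it is nonnegative.

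Differentiating the center and radius, as you propose, produces these vectors. But the first-order condition in $\theta$ only constrains the tangential derivative of $F$, so it says nothing about the sign of $\kappa$. That sign is exactly the worry you flagged: $\kappa\ge0$ means the Attackers gain nothing by letting $A_i$ be caught inside $\mathcal{A}_i$. To settle it, take the Minkowski sum of $A_j$'s reachable disk and the second-stage Apollonius disk. This gives $F(P)=\min\{y_Q:\|Q-\mathbf{x}_{A_j}\|\le v_j(\|P-\mathbf{x}_D\|+\|Q-P\|)\}$. For each $Q$, the convex function $P\mapsto\|P-\mathbf{x}_D\|+\|Q-P\|$ attains its maximum over $\mathcal{A}_i$ on $\partial\mathcal{A}_i$. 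So $\min_{\mathcal{A}_i}F=\min_{\partial\mathcal{A}_i}F$, and KKT for the disk problem yields $\kappa\ge0$.

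Similarly, ``$Y$ is monotone in $y_A$ and in distance'' does not by itself discard interior points of $A_j$'s disk, because the two effects compete. What works is that the $y$-component of $\nabla_{\mathbf{x}_A}Y$ is at least $1/(1+v)$.

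Two further holes remain in the Defender's half. First, one-sided derivatives handle multiple minimizers only for the Attackers, who may commit to a single index $l$. Suppose the Defender heads along $\boldsymbol\zeta^{i,l}_D$ while the Attackers follow some $l'$ with $\boldsymbol\zeta^{i,l'}_D\neq\boldsymbol\zeta^{i,l}_D$. Danskin's formula then gives $\frac{d^+}{dt}\hat J_i\le(c'v_j+\kappa'v_i)(\boldsymbol\zeta^{i,l'}_D\cdot\boldsymbol\zeta^{i,l}_D-1)<0$, where $c',\kappa'$ are the constants above for the $l'$-th minimizer. This is the momentary dilemma of Remark~1. You need either $m_i=1$, as in Lemma~\ref{lemma:rate_lambda_i}, or a genuine dispersal-surface argument.

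Second, $\hat J_i$ comes from the target-free auxiliary game and does not see $A_i$'s own threat to reach $y\le0$. Suppose $A_i$ runs toward the lowest point of its Apollonius circle while the Defender heads along $\boldsymbol\zeta_D\neq\boldsymbol\eta_D(\mathbf{z}_i,v_i)$. Since $\|\nabla_{\mathbf{x}_D}Y\|=v_i\|\nabla_{\mathbf{x}_{A_i}}Y\|$, $Y(\mathbf{z}_i,v_i)$ then strictly decreases. So the Defender must switch to \eqref{optimal_strategy_one_Defender_barrier} near $\hat{\mathcal{B}}_{2,i}$. The invariance argument therefore has to control the pair $(Y(\mathbf{z}_i,v_i),\hat J_i)$ jointly, including at the junction where both vanish. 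Your Step~3 treats $\hat{\mathcal{B}}_{2,i}$ as a separate piece and never does this.
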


\begin{remark}
        When multiple angle pairs minimize \eqref{y_minimum_2},  Attackers may select any optimal pair yielding identical $\hat J_i$ value, while the Defender faces a momentary dilemma \cite{isaacs1999differential}.
\end{remark}

\section{The critical speeds and critical speed pairs}\label{section:critical_speed}

Following \cite{wu2025deception,wu2026inducing}, critical speeds, at which exactly one or two Attackers are captured upon reaching the target, are essential for characterizing strategic interactions between Attackers and Defender. 

\begin{definition}\label{definition:critical_mu}
        (\textit{Critical speed for 1A1D game})\cite{wu2025deception,wu2026inducing}. The speed $\mu_i,i=1,2$ is the critical speed of the 1A1D game if the Attacker $A_i$ with maximum speed $v_i^{\text{max}}>\mu_i$ cannot be captured regardless of the Defender's strategy,
        and the Attacker $A_i$ with $v_i^{\text{max}}<\mu_i$ can be captured regardless of the strategy of $A_i$.
\end{definition}

However, critical speed for 2A1D game proposed in \cite{wu2025deception,wu2026inducing} fails to address Attackers' heterogeneous maximum speeds. To overcome this limitation, we introduce the concept of critical speed pairs as follows.

\begin{definition}\label{definition:critical_lambda}
        (\textit{Critical speed pair for 2A1D game with given capture order}). Given that the Defender captures $A_i$ first, the speed pair $(\lambda_i,\lambda_j)$ is called the critical speed pair of the 2A1D game if at most one Attacker can be captured regardless of the Defender's strategy for $(v_i^{\text{max}},v_j^{\text{max}})\succ (\lambda_i,\lambda_j)$ and both Attackers can be captured regardless of the strategy of the Attackers for $(v_i^{\text{max}},v_j^{\text{max}})\prec (\lambda_i,\lambda_j)$.    
\end{definition}

\subsection{Computation of the critical speed (pairs)}

Critical speed (pairs) can be computed by \eqref{full_barrier_one_Attacker} and \eqref{barrier_two_1_2_2}.

\begin{lemma}[\cite{wu2026inducing}]\label{lemma_critical}
        The function $Y(\mathbf{z}_i,v)$ is continuous and strictly decreasing on $v\in[0,1)$. Moreover, the critical speed $\mu_i$ for 1A1D game is the unique solution in $v\in (0,1)$ to the equation
        \begin{equation}
                Y(\mathbf{z}_i,v)=0.\label{compute_mu}
        \end{equation}
\end{lemma}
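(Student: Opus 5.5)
The plan is to get monotonicity by differentiating $Y$ directly in $v$, obtain existence and uniqueness of the root from the intermediate value theorem together with the limits at the two ends of $[0,1)$, and then identify that root with $\mu_i$ using Lemma~\ref{lemma:barrier_one}. Throughout I will assume the standing non-degeneracy conditions: $y_{A_i}>0$ (the Attacker starts outside $\mathcal{T}$) and $d\triangleq\|\mathbf{x}_{A_i}-\mathbf{x}_D\|>0$.

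\emph{Continuity and monotonicity.} On $[0,1)$ the denominator $1-v^2$ is positive, so $Y(\mathbf{z}_i,v)$ is a rational function without poles there. Hence it is continuous, indeed $C^1$. Differentiating the quotient and simplifying, the $y_D$ terms cancel, and I expect
\begin{equation*}
\frac{\partial Y}{\partial v}=\frac{2v(y_{A_i}-y_D)-d(1+v^2)}{(1-v^2)^2}.
\end{equation*}
The key bound is the Cauchy--Schwarz inequality $y_{A_i}-y_D\le d$. For $v\ge 0$ it gives a numerator at most $2vd-d(1+v^2)=-d(1-v)^2$. This is strictly negative for $v\in[0,1)$ when $d>0$. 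Therefore $Y$ is strictly decreasing in $v$ on $[0,1)$.

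\emph{Existence and uniqueness of the root.} At $v=0$ we have $Y(\mathbf{z}_i,0)=y_{A_i}>0$. As $v\to1^-$ the numerator tends to $y_{A_i}-y_D-d$, which is nonpositive.
\begin{itemize}
\item If $y_{A_i}-y_D<d$, the numerator tends to a negative constant while the denominator tends to $0^+$. Hence $Y\to-\infty$.
\item The degenerate case $y_{A_i}-y_D=d$ occurs exactly when the Defender is directly below the Attacker. There the numerator factors and $Y=y_{A_i}-vd/(1+v)$, so no root need exist. This configuration has to be excluded by hypothesis, or treated by setting $\mu_i=1$.
\end{itemize}
In the generic case, the intermediate value theorem yields a root in $(0,1)$, and strict monotonicity makes it unique. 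Call it $\mu_i$.

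\emph{Identification with the critical speed.} By monotonicity, $Y(\mathbf{z}_i,v^{\max}_i)<0$ whenever $v_i^{\max}>\mu_i$, and $Y(\mathbf{z}_i,v^{\max}_i)>0$ whenever $v_i^{\max}<\mu_i$. Lemma~\ref{lemma:barrier_one} places these two cases in the Attacker's winning region $\mathcal{R}^0_1$ and the Defender's winning region $\mathcal{R}^1_1$, respectively. This matches Definition~\ref{definition:critical_mu} exactly.

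The main obstacle is the sign estimate on the derivative, specifically recognizing the bound $y_{A_i}-y_D\le d$ and the resulting $-d(1-v)^2$ factorization. A secondary point is handling the boundary degeneracy $y_{A_i}-y_D=d$, which must be excluded for the root to exist in $(0,1)$.
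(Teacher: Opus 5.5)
Your argument is correct. The paper gives no proof of this lemma; it is imported from \cite{wu2026inducing}, so there is nothing in the text to match step by step, and the relevant comparisons are elsewhere in the paper. Your derivative $\partial_v Y=\frac{2v(y_{A_i}-y_D)-d(1+v^2)}{(1-v^2)^2}$, evaluated at $v=\mu_i$, is exactly $-C^{-1}_{\mu,i1}$ from Lemma~\ref{lemma:rate_mu_k}. There the paper only asserts $C^{-1}_{\mu,i1}>0$, and your bound of the numerator by $-d(1-v)^2$ supplies the justification for that sign. Where the paper does prove an analogous monotonicity claim itself (for $\hat J_i$ in Lemma~\ref{lemma_L_decrease}), it argues by optimality (a slower Attacker cannot reach a lower $y$-coordinate) rather than by an explicit sign estimate. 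For $Y$, your direct computation is simpler and more transparent. It also brings out hypotheses the statement leaves implicit. First, $d>0$ is genuinely needed: at $d=0$ one has $y_D=y_{A_i}$ and $Y\equiv y_{A_i}$. Second, when $x_{A_i}=x_D$ and $y_{A_i}>y_D$, your reduction $Y=(y_{A_i}+vy_D)/(1+v)$ shows that a root in $(0,1)$ exists only if $y_D<-y_{A_i}$. So, as you say, the claim ``unique solution in $(0,1)$'' requires excluding that configuration. Two small additions would make the identification step airtight. State that Definition~\ref{definition:critical_mu} can be satisfied by at most one number, since two distinct critical speeds would make every speed strictly between them both capturable and uncapturable; then your root is \emph{the} critical speed. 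Also note that Lemma~\ref{lemma:barrier_one} only covers $v_i^{\text{max}}\in(0,1)$, so the clause ``cannot be captured'' for $v_i^{\text{max}}\ge 1$ rests on the paper's standing restriction to sub-unit speeds, not on your argument.
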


\begin{lemma}\label{lemma_L_decrease}
        The function $\hat J_i(\mathbf{x},v_i,v_j)$ is continuous and strictly decreasing in $v_i\in[0,\mu_i]$ for fixed $v_j\in[0,\mu_j]$, and $v_j\in[0,\mu_j]$ for fixed $v_i\in[0,\mu_i]$, where $\mu_i,\mu_j$ are the solutions of \eqref{compute_mu}. Then, for any given $\lambda_i\in[0,\mu_i],i=1,2$,  $\lambda_j$ is the unique solution of $v_j\in(0,\mu_j),j=\{1,2\}\setminus i$ to the equation
        \begin{equation}
                \hat J_i(\mathbf{x},\lambda_i,v_j)=0.\label{compute_pair}
        \end{equation}        
        Moreover, the pair $(\lambda_i,\lambda_j)$ with $\lambda_i<\mu_i$ is a critical speed pair for 2A1D game given capture order that the Defender captures $A_i$ first.
\end{lemma}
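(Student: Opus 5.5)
Three things must be shown for Lemma~\ref{lemma_L_decrease}: continuity and strict monotonicity of $\hat J_i$, existence and uniqueness of the root of \eqref{compute_pair}, and that the resulting pair satisfies Definition~\ref{definition:critical_lambda}.

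\textbf{Continuity.} We write $\hat J_i$ as the minimum over the compact set of angles $(\theta,\phi)\in[-\pi,\pi]\times[-\pi,0]$ of
\[
F(\theta,\phi,v_i,v_j)=Y\bigl([\hat{\mathbf{x}}_i(\mathbf{x},v_i,v_j,\phi),\tilde{\mathbf{x}}(\mathbf{z}_i,v_i,\theta)],v_j\bigr).
\]
The open/closed endpoint issue in $\phi$ is harmless, since the infimum is unchanged by closure when $F$ is continuous. The function $F$ is jointly continuous on this set for $v_i,v_j<1$, because $\tilde{\mathbf{x}}$, $\hat{\mathbf{x}}_i$ and $Y$ are continuous there. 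A minimum of a jointly continuous function over a compact parameter set is continuous in the remaining variables (Berge's maximum theorem), which gives continuity of $\hat J_i$.

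\textbf{Strict monotonicity.} Here we use the geometric meaning of $\hat J_i$: it is the lowest $y$-coordinate $A_j$ can guarantee to reach. First fix $\theta,\phi$.
\begin{itemize}
\item Increasing $v_j$ has two effects. It enlarges the reachable disk of $A_j$ during the first phase, since the radius $v_j\|\tilde{\mathbf{x}}-\mathbf{x}_D\|$ grows. By Lemma~\ref{lemma_critical}, it also strictly decreases $Y(\cdot,v_j)$ at any fixed state. Hence $\hat J_i$ is strictly decreasing in $v_j$: for each fixed angle pair, $F$ strictly decreases with $v_j$ once we minimize over the enlarged disk, and the minimum of strictly smaller functions over a compact set is strictly smaller.
\item Increasing $v_i$ enlarges the Apollonius circle $\{\tilde{\mathbf{x}}(\mathbf{z}_i,v_i,\theta)\}$, so the set of admissible capture points for $A_i$ grows. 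It also lengthens the Defender's travel $\|\tilde{\mathbf{x}}-\mathbf{x}_D\|$ to each capture point. Both effects favor $A_j$.
\end{itemize}

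To make the $v_i$ argument rigorous, I would show that for $v_i'>v_i$ every capture point on the old circle is dominated by one on the new circle, meaning it is reached no earlier by $D$ and gives $A_j$ a larger disk. The natural candidate uses the nesting of Apollonius disks, which grow monotonically toward the Attacker side. I would then take the minimizing angles for $v_i$ and exhibit a strictly smaller value of $F$ for $v_i'$.

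\textbf{Expected main obstacle.} This strict decrease in $v_i$ is the step I expect to be hardest. The Apollonius circles are not simply nested radially, so I would first try the envelope-style bound above. If that fails, I would argue through the barrier/game interpretation: a faster $A_i$ can mimic the slower strategy while delaying capture, so $\hat J_i$ cannot increase. Strictness then comes from $Y$ being strictly decreasing plus strict growth of the disk radius.

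\textbf{Root of \eqref{compute_pair}.} Fix $\lambda_i\in[0,\mu_i]$ and evaluate the endpoints in $v_j$.
\begin{itemize}
\item At $v_j=0$, $A_j$ cannot move, and $\hat J_i=Y(\mathbf{z}_j,0)=y_{A_j}>0$.
\item At $v_j=\mu_j$, first take the choice $\phi$ pointing toward $A_j$'s optimal direction, or simply note that the disk contains $\mathbf{x}_{A_j}$ itself. The Defender has moved to $\tilde{\mathbf{x}}$, and for $A_j$ at speed $\mu_j$ starting from the barrier $Y(\mathbf{z}_j,\mu_j)=0$, the Defender's detour makes the state no better for $D$ than the 1A1D barrier. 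Hence $\hat J_i(\mathbf{x},\lambda_i,\mu_j)\le 0$.
\item Strictness should follow from $\|\tilde{\mathbf{x}}-\mathbf{x}_D\|>0$ whenever $\|\mathbf{x}_{A_i}-\mathbf{x}_D\|>0$, which gives $A_j$ a positive head start.
\end{itemize}
The intermediate value theorem together with strict monotonicity then yields a unique $\lambda_j\in(0,\mu_j)$.

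\textbf{Critical speed pair.} For $(v_i^{\max},v_j^{\max})\succ(\lambda_i,\lambda_j)$, monotonicity in each argument gives $\hat J_i<0$, so the state is on the Attackers' side of $\mathcal{\bar B}_{2,i}$ and at most one Attacker is captured. For $(v_i^{\max},v_j^{\max})\prec(\lambda_i,\lambda_j)$ we get $\hat J_i>0$. Moreover, $v_i^{\max}<\lambda_i<\mu_i$ implies $Y(\mathbf{z}_i,v_i^{\max})>0$ by Lemma~\ref{lemma_critical}, so $\mathbf{z}_i\in\mathcal{R}_1^1$. The Defender therefore wins by the barrier characterization of $\mathcal{B}_{2,i}$, which is where the hypothesis $\lambda_i<\mu_i$ is used.
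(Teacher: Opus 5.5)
\textbf{Gap: strict decrease of $\hat J_i$ in $v_i$.} You leave this step as a plan, and the mechanism you name for strictness does not apply. With $v_j$ held fixed, the speed argument of $Y$ never changes, so Lemma~\ref{lemma_critical} gives nothing. A larger $v_i$ can delay the capture of $A_i$, but it also moves $D$'s terminal position $\tilde{\mathbf{x}}$, and that may favour $D$.

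In fact the statement as written fails at $v_j=0$. Your own endpoint computation gives $\hat J_i(\mathbf{x},v_i,0)=y_{A_j}$ for every $v_i$. So strictness in $v_i$ can only hold for $v_j\in(0,\mu_j]$, which suffices since $\lambda_j>0$.

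Your fallback, that a faster $A_i$ can mimic a slower one, is exactly the paper's auxiliary-game argument with payoff $y_{A_j}(t_{A_j})$. It yields only weak monotonicity. The paper's appeal to a \emph{contradiction with optimal strategies} does not supply strictness either. The gap also propagates: in your critical-pair check, the $\succ$ case with $v_i^{\text{max}}>\lambda_i$ and $v_j^{\text{max}}=\lambda_j$ needs exactly this strictness.

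\textbf{How your domination idea can be completed.} Your worry about nesting is unnecessary: the disks $\{p:\|p-\mathbf{x}_{A_i}\|\le v_i\|p-\mathbf{x}_D\|\}$ are nested in $v_i$. Let $\theta^*$ be optimal for $v_i$, and set $p=\tilde{\mathbf{x}}(\mathbf{z}_i,v_i,\theta^*)$ and $T=\|p-\mathbf{x}_D\|$. Let $a^*$ be the optimal endpoint of $A_j$, and $p^*$ the lowest point of $A_j$'s Apollonius disk for $(a^*,p)$, so $y_{p^*}=\hat J_i(\mathbf{x},v_i,v_j)$.

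For $v_i'>v_i$, the point $p$ is interior to the larger disk. So the ray from $\mathbf{x}_D$ through $p$ exits it at some $q$ with $T':=\|q-\mathbf{x}_D\|=T+\|q-p\|$; that is, $D$ passes through $p$ on its way to $q$. Let $A_j$ continue from $a^*$ toward $p^*$ (stopping there if reached) for time $\|q-p\|$. It ends at some $a'$ with $\|a'-\mathbf{x}_{A_j}\|\le v_jT'$. The triangle inequality gives $\|p^*-a'\|\le v_j\|p^*-q\|$. Hence $p^*$ stays in $A_j$'s dominance region, and $\hat J_i(\mathbf{x},v_i',v_j)\le y_{p^*}$.

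This inequality is strict unless both triples $\mathbf{x}_D,p,p^*$ and $\mathbf{x}_{A_j},a^*,p^*$ are collinear in that order. Otherwise, either $p^*$ is interior to the new dominance region, or $a'$ is interior to $A_j$'s reachable disk, where $Y$ has no local minimum.

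In the doubly collinear case, $p^*$ lies on $A_j$'s initial Apollonius circle with the same normal, so $\hat J_i(\mathbf{x},v_i,v_j)=Y(\mathbf{z}_j,v_j)$. But for $v_i'>0$ you can pick a capture point on the nondegenerate circle that lies off the segment from $\mathbf{x}_D$ to the lowest point $p_0$ of $A_j$'s initial disk. With $A_j$ heading to $p_0$, this gives a value strictly below $Y(\mathbf{z}_j,v_j)$, so strictness holds here too.

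\textbf{The endpoint $v_j=\mu_j$.} The same off-segment choice is what yields $\hat J_i(\mathbf{x},\lambda_i,\mu_j)<0$ for $\lambda_i>0$. Your alternative, \emph{the disk contains $\mathbf{x}_{A_j}$}, fails here: $Y$ increases when $A_j$ stays put while $D$ approaches it. A positive travel time alone also gives no strictness when $D$'s path lies on its optimal line against $A_j$.

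\textbf{What matches the paper.} The rest follows the paper: Berge's theorem for continuity, the intermediate value theorem for \eqref{compute_pair}, and the sign of $\hat J_i$ together with $Y(\mathbf{z}_i,v_i^{\text{max}})>0$ for the critical pair. Your $v_j$ argument (larger reachable disk plus Lemma~\ref{lemma_critical}, evaluated at the old minimizing $\theta$) is more direct than the paper's \emph{follows similarly}, and it is sound after two small repairs:
\begin{itemize}
\item $F$ is not monotone in $v_j$ for fixed $\phi$; it is monotone only after minimizing over $\phi$.
\item You need minimizing over the lower semicircle to equal minimizing over the whole disk. This holds because $\partial Y/\partial y_A\ge 1/(1+v_j)>0$, so the minimizer lies on the boundary with $\phi\in(-\pi,0)$.
\end{itemize}
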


\begin{proof}
(1) Properties of function $\hat J_i(\mathbf{x},v_i,v_j)$.
Consider $\tilde\theta\in[-\pi,\pi],\tilde\phi\in[-\pi,0]$, then $
\hat J_i(\mathbf{x},v_i,v_j)=\min_{\tilde\theta,\tilde\phi}Y([\hat{\mathbf{x}}_i(\mathbf{x},v_i,v_j,\tilde\phi), \tilde{\mathbf{x}}(\mathbf{z}_i,v_i,\tilde\theta)],v_j)$
holds based on the definition of $\hat J_i$. The continuity of $\hat J_i(\mathbf{x},v_i,v_j)$ with respect to $\mathbf{x}, v_i,v_j$ follows from Berge Maximum Theorem \cite{Beavis_Dobbs_1990}.

To establish the strict monotonicity, we introduce an auxiliary 2A1D game where Defender $D$ with maximum speed $1$ captures $A_i$ with maximum speed $v_i$ first and then $A_j$ with maximum speed $v_j$, without a target region. The payoff function is $y_{A_j}(t_{A_j})$ (the $y$-coordinate where $A_j$ is captured), which Attackers seek to minimize and Defender aims to maximize.

Therefore, we have the following result based on \cite{yan2021cooperative}.
Given state $\mathbf{x}$, the maximum speeds of the Attackers $v_i,v_j$ and assuming the Defender captures $A_i$ first, under the payoff function $y_{A_j}(t_{A_j})$, all players' optimal strategies are constant headings at maximum speed for $t\le t_{A_i}$.

Let $L_i(\mathbf{x},v_i,v_j)\triangleq \min_{\gamma_A}\max_{\gamma_D} y_{A_j}(t_{A_j})$ denote the value of this game.
To compute $L_i(\mathbf{x},v_i,v_j)$, we follow the optimal strategies and let the Attacker $A_i$ move at maximum speed toward point $\tilde{\mathbf{x}}(\mathbf{z}_i,v_i,\theta)$, which is a point on the boundary of the $A_1$'s dominant regions \cite{yan2021cooperative} to delay the capture of $A_j$ while the Attacker $A_j$ moves in direction $\phi\in(-\pi,\pi]$ for $t\le t_{A_i}$. Thus, the Defender must move toward $\tilde{\mathbf{x}}(\mathbf{z}_i,v_i,\theta)$ to capture $A_i$.
The time to capture $A_i$ is $t_{A_i}=\|\tilde{\mathbf{x}}(\mathbf{z}_i,v_i,\theta)-\mathbf{x}_D\|$,
and the Attacker $A_j$'s position at $t_{A_i}$ is $\hat{\mathbf{x}}_i(\mathbf{x},v_i, v_j,\phi)$.
Consequently, the smallest $y$-coordinate that the Attacker $A_j$ can potentially reach can be calculated as $L_i(\mathbf{x},v_i,v_j)=\hat J_i(\mathbf{x},v_i,v_j)$.
        
Then, we can back to prove the property of function $\hat J_i(\mathbf{x},v_i,v_j)$. Given fixed $v_j$, suppose $v_i^1,v_i^2\in[0,\mu_i]$ with $v_i^1<v_i^2$, then $\hat J_i(\mathbf{x},v_i^1,v_j)>\hat J_i(\mathbf{x},v_i^2,v_j)$ must be hold.
Otherwise, $\hat J_i(\mathbf{x},v_i^1,v_j)\le\hat J_i(\mathbf{x},v_i^2,v_j)$ indicates $L_i(\mathbf{x},v_i^1,v_j)\le L_i(\mathbf{x},v_i^2,v_j)$, which follows that the Attackers move with the relatively slow speed $v_i^1$ can reach the point with smaller $y$-coordinate, which leads to the contradiction with optimal strategies. The strict monotonicity with respect to $v_j$ given fixed $v_i$ follows similarly. 

(2) Property of the critical speed pairs.
From Lemma \ref{lemma_critical}, the critical speed $\mu_i,i=1,2$ exists. For every given speed $\lambda_i\in[0,\mu_i],i=1,2$, there are $\hat J_i(\mathbf{x},\lambda_i,0)=y_{A_j}>0$ and $\hat J_i(\mathbf{x},\lambda_i,\mu_j)<Y(\mathbf{z}_j,\mu_j)=0$. 
Therefore, there exists a unique $\lambda_j\in(0,\mu_j)$ such that $\hat J_i(\mathbf{x},\lambda_i,\lambda_j)=0$. For $(v_i,v_j)\prec (\lambda_i,\lambda_j)$ with $\lambda_i<\mu_i$, it has $\hat J_i(\mathbf{x},v_i,v_j)>0$ and $Y(\mathbf{z}_i,v_i)>0$, indicating that the Defender can successfully capture both the Attackers. For $(v_i,v_j)\succ (\lambda_i,\lambda_j)$, it has $\hat J_i(\mathbf{x},v_i,v_j)<0$, which implies that the Defender cannot capture the Attacker $A_j$. Therefore, $(\lambda_i,\lambda_j)$ is the critical speed pair for 2A1D game given the Defender captures $A_i$ first.
\end{proof}

From Lemma \ref{lemma_L_decrease}, there exists a unique continuous, strictly decreasing function $h_i:\mathbb{R}^6\times\mathbb{R}\to\mathbb{R}$ such that $\hat J_i(\mathbf{x},v_i,h_i(\mathbf{x},v_i))=0$ for all $v_i\in[0,\mu_i]$. 
The inverse function  $h^{-1}_i(\mathbf{x},v_j)$ for $v_j\in[h_i(\mathbf{x},\mu_i),h_i(\mathbf{x},0)]$ with fixed $\mathbf{x}$ exists. 
For clarity, we omit the arguments of $\mathbf{x}$ for functions $h_i,h_i^{-1}$ when no ambiguity arises.
Define the function $h(v_1)\triangleq\max\{\bar h_2(v_1),h_1(v_1)\}$, where
\begin{equation}
        \bar h_2(v_1)\triangleq\left\{\begin{array}{ll}
                \mu_2, & v_1\in(0,h_2(\mu_2)),\\
                h_2^{-1}(v_1), &v_1\in[h_2(\mu_2),h_2(0)],
        \end{array}\right.
\end{equation}
as the boundary in $(v_1,v_2)$-plane separating the regions where
$J^*(\mathbf{x},v_1,v_2)=2$ and $J^*(\mathbf{x},v_1,v_2)\le1$, as seen in Fig. \ref{fig:diagram_v1v2_plane}.
\begin{lemma}\label{lemma:h_property}
        The function $h(v_1)$ is continuous and strictly decreasing on $v_1\in[h_2(\mu_2),\mu_1]$. Moreover, for all $v_1\in(h_2(\mu_2),\mu_1)$, $J^*(\mathbf{x},v_1^{\text{max}},v_2^{\text{max}})=2$ when $(v_1^{\text{max}},v_2^{\text{max}})\prec (v_1,h(v_1))$, and $J^*(\mathbf{x},v_1^{\text{max}},v_2^{\text{max}})\le1$ when $(v_1^{\text{max}},v_2^{\text{max}})\succ (v_1,h(v_1))$.
\end{lemma}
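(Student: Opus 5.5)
The plan is to handle continuity and monotonicity first, piece by piece, and then derive the payoff characterization from Lemma~\ref{lemma_L_decrease} together with the fact that the Defender may choose the capture order. Throughout, write $v_2=h_1(v_1)$ for the curve on which $\hat J_1=0$. Similarly, $v_1=h_2(v_2)$ is the curve on which $\hat J_2=0$, with $v_2\in[0,\mu_2]$; equivalently this is $v_2=h_2^{-1}(v_1)$ for $v_1\in[h_2(\mu_2),h_2(0)]$.

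\textbf{Step 1: properties of the two branches.} By Lemma~\ref{lemma_L_decrease}, $h_1$ is continuous and strictly decreasing on $[0,\mu_1]$. The map $h_2$ is continuous and strictly decreasing on $[0,\mu_2]$, so its inverse $h_2^{-1}$ is continuous and strictly decreasing on $[h_2(\mu_2),h_2(0)]$. At the left endpoint we have $h_2^{-1}(h_2(\mu_2))=\mu_2$. Hence on the interval $[h_2(\mu_2),\mu_1]$ the function $\bar h_2$ coincides with $h_2^{-1}$ wherever that is defined. I would check that $h_2(0)$ and $\mu_1$ are compatible, i.e.\ $h_2(0)\le\mu_1$, using $\hat J_2(\mathbf{x},0,\mu_1)<Y(\mathbf{z}_1,\mu_1)=0$ exactly as in part (2) of the proof of Lemma~\ref{lemma_L_decrease}. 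If $h_2(0)<\mu_1$, I would extend $\bar h_2$ past $h_2(0)$ by $0$, or simply note that $h_1>0$ dominates there. The maximum of two continuous functions is continuous. The maximum of two strictly decreasing functions is also strictly decreasing: if $a<b$, then $\max\{f(a),g(a)\}>\max\{f(b),g(b)\}$, because each of $f(b),g(b)$ is strictly less than its value at $a$. This gives the first claim of the lemma.

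\textbf{Step 2: the region $(v_1^{\max},v_2^{\max})\prec(v_1,h(v_1))$.} Here $v_2^{\max}\le h(v_1)$, and $v_2^{\max}$ lies below one of the two branches, so I split on which branch attains the maximum. Suppose $h(v_1)=h_1(v_1)$. Monotonicity of $\hat J_1$ in both arguments (Lemma~\ref{lemma_L_decrease}) then gives $\hat J_1(\mathbf{x},v_1^{\max},v_2^{\max})>0$ whenever $v_1^{\max}<\mu_1$. By Lemma~\ref{lemma_critical} we also have $Y(\mathbf{z}_1,v_1^{\max})>0$, so the state lies in the Defender's winning side of $\mathcal{B}_{2,1}$ and capturing $A_1$ first yields $J^*=2$. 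This is exactly the critical-pair argument at the end of Lemma~\ref{lemma_L_decrease}. Now suppose $h(v_1)=h_2^{-1}(v_1)$, which means $v_1=h_2(h(v_1))$. In this case $(v_2^{\max},v_1^{\max})\prec(h(v_1),h_2(h(v_1)))$, so capturing $A_2$ first guarantees both captures. In either case the Defender can select a capture order that wins, so $J^*=2$.

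\textbf{Step 3: the region $(v_1^{\max},v_2^{\max})\succ(v_1,h(v_1))$.} Now $v_2^{\max}\ge h(v_1)\ge\max\{h_1(v_1),\bar h_2(v_1)\}$, with one inequality strict. Strict monotonicity then gives $\hat J_1\le 0$ for the order that captures $A_1$ first, and $\hat J_2\le 0$ (or $Y(\mathbf{z}_2,v_2^{\max})\le 0$ when $\bar h_2=\mu_2$) for the other order. The strictness is what upgrades these to $<0$. So neither capture order lets the Defender secure both captures, and the Attackers, following the barrier strategies \eqref{optimal_a1_first_phase}--\eqref{optimal_d_first_phase}, ensure at least one escapes; hence $J^*\le1$.

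\textbf{Main obstacle.} The step I expect to be hardest is the justification that the Defender has no third option beyond the two fixed-order games, so that $J^*=2$ exactly when some order succeeds. Examples of such an option would be switching targets mid-game or a mixed pursuit. For this I would invoke the 2A1D barrier characterization via $\mathcal{B}_{2,1}\cup\mathcal{B}_{2,2}$ from \cite{yan2021cooperative,deng2023reach} as given. I would also take care with boundary cases where $v_1^{\max}=\mu_1$ and with the flat piece $\bar h_2=\mu_2$.
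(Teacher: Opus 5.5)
Your Steps 1--2 and the first half of Step 3 match the paper's proof. That first half of Step 3 is where you show $\hat J_1(\mathbf{x},v_1^{\text{max}},v_2^{\text{max}})<0$ and $\hat J_2(\mathbf{x},v_2^{\text{max}},v_1^{\text{max}})<0$ above the curve. The paper simply assumes $h_1(v_1)\ge\bar h_2(v_1)$ without loss of generality instead of splitting on the active branch. Your explicit handling of $v_1>h_2(0)$ is, if anything, more careful than the paper's.

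\textbf{The gap.} It lies in the last inference, from ``neither fixed-order game lets the Defender win'' to $J^*\le1$. Saying the Attackers follow \eqref{optimal_a1_first_phase}--\eqref{optimal_d_first_phase} does not define a strategy. Those controls are indexed by a capture order $i$, and \eqref{optimal_d_first_phase} is the Defender's control, not the Attackers'. If the Attackers commit to the order-$i$ controls, the Defender can simply pursue $A_j$ instead, and then nothing keeps $\hat J_j$ from increasing. The barrier lemma for $\mathcal{B}_{2,i}$ is stated only for a committed capture order. So invoking a union-of-barriers characterization ``as given'' assumes exactly the claim this part of the lemma has to establish.

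\textbf{The missing idea.} The paper supplies a state-feedback selection rule. At each instant take $i^*=\arg\max_{i}\hat J_i(\mathbf{x},v_i^{\text{max}},v_j^{\text{max}})$, and let the Attackers play the order-$i^*$ controls \eqref{optimal_a1_first_phase}--\eqref{optimal_a2_first_phase}. By the saddle-point property of the auxiliary game in the proof of Lemma~\ref{lemma_L_decrease}, these controls keep $\hat J_{i^*}$ from increasing whatever $\mathbf{u}_D$ is. Hence $\max_i\hat J_i$ is non-increasing and stays negative until some Attacker is removed.

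This suffices for both ways an Attacker can be removed:
\begin{itemize}
\item If $A_i$ is captured first, then $\mathbf{x}_{A_i}=\mathbf{x}_D$ gives $\tilde{\mathbf{x}}(\mathbf{z}_i,v_i,\theta)=\mathbf{x}_D$. So $\hat J_i$ collapses to $Y(\mathbf{z}_j,v_j^{\text{max}})<0$, and $A_j$ escapes by Lemma~\ref{lemma:barrier_one}.
\item If an Attacker reaches the target first, at most one capture is possible anyway.
\end{itemize}
This is precisely what rules out the target-switching ``third option'' you flagged.

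The paper treats the switching surface $\hat J_1=\hat J_2$ only informally, so a fully rigorous version would need more care there. Still, this mechanism is the ingredient your Step 3 lacks.
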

\begin{proof}
Since $\bar h_2(h_2(0))=h_2^{-1}(h_2(0))=0$, the function $\bar h_2(v_1)$ is continuous and strictly decreasing on $v_1\in[h_2(\mu_2),\mu_1]$.
Combined with $h(v_1)\triangleq\max\{\bar h_2(v_1),h_1(v_1)\}$ and the established properties of $h_1(v_1)$, the function $h(v_1)$ is continuous and strictly decreasing on $v_1\in[h_2(\mu_2),\mu_1]$.

Given the game state $\mathbf{x}$ and the speed $v_1\in(h_2(\mu_2),\mu_1)$, assume $h_1(v_1)\ge \bar h_2(v_1)$ holds without loss of generality. 

If $(v_1^{\text{max}},v_2^{\text{max}})\prec (v_1,h(v_1))$ holds, then $(v_1^{\text{max}},v_2^{\text{max}})\prec (v_1,h_1(v_1))$. The definition of $h_1$ yields $\hat J_1(\mathbf{x},v_1,h_1(v_1))=0$. Consequently, $\hat J_1(\mathbf{x},v_1^{\text{max}},v_2^{\text{max}})>0$ by Lemma 5. Thus, the Defender can capture $A_1$ by adopting (12) first and then capture $A_2$ by adopting (5) regardless of the Attackers' strategy. Therefore, $J^*(\mathbf{x},v_1^{\text{max}},v_2^{\text{max}})=2$ holds for $(v_1^{\text{max}},v_2^{\text{max}})\prec (v_1,h(v_1))$. 

For $(v_1^{\text{max}},v_2^{\text{max}})\succ (v_1,h(v_1))=(v_1,h_1(v_1))$, it has that $\hat J_1(\mathbf{x},v_1^{\text{max}},v_2^{\text{max}})<0$. Additionally, $\hat J_2(\mathbf{x},v_2^{\text{max}},v_1^{\text{max}})<\hat J_2(\mathbf{x},h_1(v_1),v_1)\le \hat J_2(\mathbf{x},\bar h_2(v_1),v_1)=0$
also holds true. Let $i^*\triangleq \arg\max_{i=1,2}\hat J_i(\mathbf{x}, v_i^{\text{max}},v_j^{\text{max}})$ 
denote the index with the larger $\hat J_i$ value. Thus, if the Attackers adopt (10) and (11) with $i=i^*$, then the value of $\max_{i=1,2} \hat J_i(\mathbf{x}, v_i^{\text{max}},v_j^{\text{max}})$ remains non-increasing until one Attacker is removed based on the equilibrium property of the auxiliary game considered in the proof of Lemma \ref{lemma_L_decrease}. This ensures the Defender can capture at most one Attacker, regardless of the Attackers' strategy. Therefore, $J^*(\mathbf{x},v_1^{\text{max}},v_2^{\text{max}})\le1$.        
\end{proof}

\begin{figure}[htbp]      
        \centering
        \includegraphics[width=0.98\linewidth]{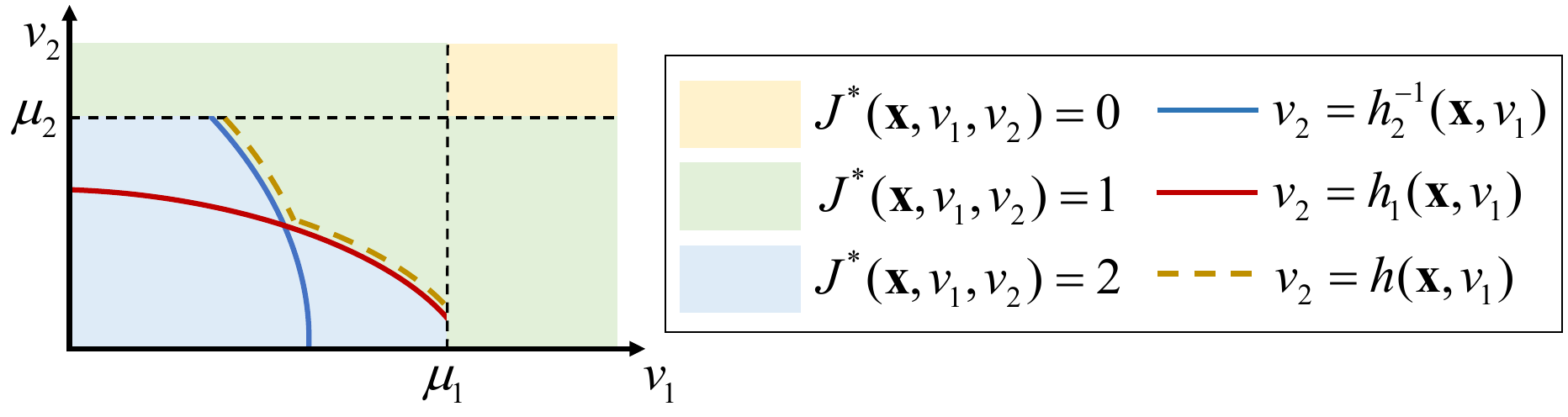}
        \caption{The illustration of the functions $h_1,h_2$ and $h$.
        }
        \label{fig:diagram_v1v2_plane}
\end{figure}

While explicit expressions for $h_i,h^{-1}_i,i=1,2$, cannot be obtained, their values can be efficiently computed using the bisection method. Note that critical speeds and pairs depend on $\mathbf{z}_i$, and $\mathbf{x}$, respectively, the arguments will be omitted when no ambiguity arises.

\subsection{The rate of the critical speed (pairs)}

Since critical speed (pairs) evolve with game state, we derive their time derivatives for subsequent analysis.
\begin{lemma}[\cite{wu2026inducing}]\label{lemma:rate_mu_k}
        The rate of the critical speed $\mu_i$ for $i=1,2$ is
        \begin{equation}
                \dot{\mu}_i = C_{\mu,i1}[\mathbf{u}_{A_i}\mathbf{C}_{\mu,i2}+\mathbf{u}_D\mathbf{C}_{\mu,i3}],\label{rate_mu_k_1}
        \end{equation}
        where $C^{-1}_{\mu,i1}=-\frac{2\mu_i}{(1-\mu_i^2)^2}(y_{A_i}-y_D)+\frac{1+\mu_i^2}{(1-\mu_i^2)^2}\|\mathbf{x}_{A_i}-\mathbf{x}_D\|>0$, 
        $\mathbf{C}^T_{\mu,i2}=\frac{1}{1-\mu_i^2}([0,1]+\mu_i\langle \mathbf{x}_D-\mathbf{x}_{A_i}\rangle)$,
        $\mathbf{C}^T_{\mu,i3}=-\frac{\mu_i}{1-\mu_i^2}([0,\mu_i]+\langle \mathbf{x}_D-\mathbf{x}_{A_i}\rangle)$.
\end{lemma}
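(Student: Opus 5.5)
The plan is implicit differentiation. Lemma \ref{lemma_critical} shows that $\mu_i$ is defined implicitly by $Y(\mathbf{z}_i,\mu_i)=0$ along the trajectory. We therefore differentiate the identity $Y(\mathbf{z}_i(t),\mu_i(t))\equiv 0$ in time, which gives
\begin{equation*}
\frac{\partial Y}{\partial \mathbf{x}_{A_i}}\dot{\mathbf{x}}_{A_i}+\frac{\partial Y}{\partial \mathbf{x}_{D}}\dot{\mathbf{x}}_{D}+\frac{\partial Y}{\partial v}\dot\mu_i=0 .
\end{equation*}
We then solve for $\dot\mu_i$, which requires $\partial Y/\partial v\neq 0$ at $v=\mu_i$. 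Lemma \ref{lemma_critical} states only strict monotonicity, so we will establish non-vanishing of the derivative directly from the explicit formula. Substituting the dynamics \eqref{dynamics}, namely $\dot{\mathbf{x}}_{A_i}=\mathbf{u}_{A_i}$ and $\dot{\mathbf{x}}_D=\mathbf{u}_D$, then yields the claimed structure $\dot\mu_i=-(\partial Y/\partial v)^{-1}[\mathbf{u}_{A_i}\nabla_{A_i}Y^T+\mathbf{u}_D\nabla_D Y^T]$.

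The spatial gradients are routine. Write $d=\|\mathbf{x}_{A_i}-\mathbf{x}_D\|$, which is positive because no capture has occurred. Then $\nabla_{\mathbf{x}_{A_i}}d=\langle\mathbf{x}_{A_i}-\mathbf{x}_D\rangle=-\langle\mathbf{x}_D-\mathbf{x}_{A_i}\rangle$ and $\nabla_{\mathbf{x}_D}d=\langle\mathbf{x}_D-\mathbf{x}_{A_i}\rangle$. This gives $\nabla_{A_i}Y=\frac{1}{1-v^2}([0,1]+v\langle\mathbf{x}_D-\mathbf{x}_{A_i}\rangle)$, which matches $\mathbf{C}_{\mu,i2}^T$. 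It also gives $\nabla_D Y=-\frac{v}{1-v^2}([0,v]+\langle\mathbf{x}_D-\mathbf{x}_{A_i}\rangle)$, which matches $\mathbf{C}_{\mu,i3}^T$.

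The speed derivative comes from the quotient rule applied to $N(v)/(1-v^2)$ with $N=y_A-v^2y_D-vd$. This gives $\partial_vY=\frac{(-2vy_D-d)(1-v^2)+2v(y_A-v^2y_D-vd)}{(1-v^2)^2}$. Expanding the numerator gives $2v(y_A-y_D)-(1+v^2)d$. Hence $-\partial_vY=-\frac{2v}{(1-v^2)^2}(y_{A_i}-y_D)+\frac{1+v^2}{(1-v^2)^2}d$, which is exactly $C_{\mu,i1}^{-1}$ at $v=\mu_i$, and this fixes the sign convention.

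The main obstacle is positivity of $C^{-1}_{\mu,i1}$, which is needed both for the implicit function theorem to apply and for the stated inequality. The first approach is Cauchy–Schwarz: $|y_{A_i}-y_D|\le d$ and $2v\le 1+v^2$ give $2v(y_A-y_D)\le(1+v^2)d$. Equality would require $v=1$, which is excluded since $\mu_i\in(0,1)$. Hence $C^{-1}_{\mu,i1}>0$ strictly whenever $d>0$.

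Finally, $Y$ is smooth near any state with $d>0$ and $v<1$. The implicit function theorem then gives that $\mu_i$ is $C^1$ along the trajectory, and the chain-rule computation above is justified.
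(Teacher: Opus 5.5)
Your proposal is correct and takes essentially the same route as the paper. The paper imports this lemma from \cite{wu2026inducing} without proof, but its proof of the analogous Lemma~\ref{lemma:rate_lambda_i} uses exactly this implicit differentiation of the defining identity ($\hat J_i(\mathbf{x},\lambda_i,\lambda_j)=0$ there, $Y(\mathbf{z}_i,\mu_i)=0$ here), and your gradient and quotient-rule computations reproduce $\mathbf{C}_{\mu,i2}$, $\mathbf{C}_{\mu,i3}$ and $C_{\mu,i1}^{-1}$ exactly. Your bound $2\mu_i(y_{A_i}-y_D)\le 2\mu_i d<(1+\mu_i^2)d$ is a useful addition: strict monotonicity alone does not give $\partial Y/\partial v\neq 0$, and the statement only asserts it.
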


\begin{lemma}\label{lemma:rate_lambda_i}
        Given the capture order that the Defender captures $A_i$ first and the critical speed pair $(\lambda_i,\lambda_j)$ for fixed $\lambda_i\in(h_j(\mu_j),h_j(0))$, if the angle pair that minimizes \eqref{y_minimum_2} is unique, i.e. $m_i=1$, then the rate of $\lambda_j$ is 
        \begin{equation}
                \dot{\lambda}_j=C_{\lambda,i1}\left[\mathbf{u}_{A_i}\mathbf{C}_{\lambda,i2}+\mathbf{u}_{A_j}\mathbf{C}_{\lambda,i3}+\mathbf{u}_D\mathbf{C}_{\lambda,i4}\right].\label{rate_lambda_i_1}
        \end{equation}
        Here, $C^{-1}_{\lambda,i1}=-\frac{\partial \hat J_i}{\partial v_j}|_{\lambda_i,\lambda_j}>0, \mathbf{C}^T_{\lambda,i2}=\frac{\partial \hat J_i}{\partial \mathbf{x}_{A_i}}|_{\lambda_i,\lambda_j},\mathbf{C}^T_{\lambda,i3}=\frac{\partial \hat J_i}{\partial \mathbf{x}_{A_j}}|_{\lambda_i,\lambda_j},\mathbf{C}^T_{\lambda,i4}=\frac{\partial \hat J_i}{\partial \mathbf{x}_D}|_{\lambda_i,\lambda_j}$. 
\end{lemma}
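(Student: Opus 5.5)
The plan is to treat $\lambda_j$ as an implicit function of the state. Along a trajectory, $\lambda_i$ is held fixed and $\lambda_j(t)$ is defined by
\begin{equation*}
\hat J_i(\mathbf{x}(t),\lambda_i,\lambda_j(t))=0 .
\end{equation*}
Differentiating this identity in time by the chain rule gives
\begin{equation*}
\frac{\partial \hat J_i}{\partial \mathbf{x}_{A_i}}\dot{\mathbf{x}}_{A_i}+\frac{\partial \hat J_i}{\partial \mathbf{x}_{A_j}}\dot{\mathbf{x}}_{A_j}+\frac{\partial \hat J_i}{\partial \mathbf{x}_{D}}\dot{\mathbf{x}}_{D}+\frac{\partial \hat J_i}{\partial v_j}\dot\lambda_j=0 ,
\end{equation*}
with all partials evaluated at $(\lambda_i,\lambda_j)$. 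Substituting the dynamics \eqref{dynamics} and solving for $\dot\lambda_j$ yields exactly \eqref{rate_lambda_i_1}, with $C_{\lambda,i1}=-(\partial\hat J_i/\partial v_j)^{-1}$. What remains is to justify two things: that the needed derivatives exist, and that $\partial\hat J_i/\partial v_j<0$.

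\textbf{Differentiability.} I would use an envelope (Danskin) argument on the definition \eqref{y_minimum_2}. Write
\begin{equation*}
\hat J_i=\min_{\theta,\phi}F(\mathbf{x},v_i,v_j,\theta,\phi), \qquad F=Y([\hat{\mathbf{x}}_i,\tilde{\mathbf{x}}],v_j).
\end{equation*}
The function $F$ is a composition of $Y$, $\tilde{\mathbf{x}}$ and $\hat{\mathbf{x}}_i$. It is smooth wherever $\|\mathbf{x}_{A_i}-\mathbf{x}_D\|\neq 0$, $\|\tilde{\mathbf{x}}-\mathbf{x}_D\|\neq0$, $\|\hat{\mathbf{x}}_i-\tilde{\mathbf{x}}\|\neq0$ and $v_j<1$. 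These hold in the relevant regime: no capture has occurred yet, and $\lambda_i\in(h_j(\mu_j),h_j(0))$ keeps us in the interior. To handle the boundary of the angle domain, I would minimize over the compact set $[-\pi,\pi]\times[-\pi,0]$, as in the proof of Lemma~\ref{lemma_L_decrease}. By Danskin's theorem, $\hat J_i$ has directional derivatives equal to the minimum over minimizers of the corresponding partials of $F$. Under the hypothesis $m_i=1$ the minimizer $(\theta^i_1,\phi^i_1)$ is unique, so $\hat J_i$ is differentiable in $(\mathbf{x},v_j)$. Its gradient is the partial gradient of $F$ at the optimal angles, holding the angles fixed. This is the step that uses $m_i=1$.

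\textbf{Sign of $C^{-1}_{\lambda,i1}$.} Lemma~\ref{lemma_L_decrease} gives strict decrease of $\hat J_i$ in $v_j$, but that only yields $\partial\hat J_i/\partial v_j\le 0$. To get strict negativity, I would compute $\partial F/\partial v_j$ directly at the optimal angles. Increasing $v_j$ has two effects, and both push the same way:
\begin{itemize}
\item it moves $\hat{\mathbf{x}}_i$ further along the optimal downward direction $\phi^i_1\in(-\pi,0)$, which lowers $y$ and hence $Y$;
\item it lowers $Y(\cdot,v_j)$ itself, since $Y$ is strictly decreasing in speed.
\end{itemize}
For the second effect I would reuse the computation behind $C^{-1}_{\mu,i1}>0$ in Lemma~\ref{lemma:rate_mu_k}: the expression there is exactly $-\partial Y/\partial v$, and it is positive because $\|\mathbf{x}_{A}-\mathbf{x}_D\|\ge|y_A-y_D|$ and $1+\mu^2>2\mu$. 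For the first effect, the $y$-component of the displacement $\|\tilde{\mathbf{x}}-\mathbf{x}_D\|[\cos\phi,\sin\phi]$ is negative. Since $\partial Y/\partial y_A>0$, this also contributes a negative term. Summing gives $\partial\hat J_i/\partial v_j<0$, hence $C_{\lambda,i1}>0$ and the division is legitimate.

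\textbf{Main obstacle.} I expect the hardest part to be checking that $\partial Y/\partial y_A$ is positive at the post-capture state. Here $\partial Y/\partial y_A=\bigl(1-v\,\sin\angle\bigr)/(1-v^2)$, where $\sin\angle$ is the $y$-component of the unit vector $\langle \mathbf{x}_A-\mathbf{x}_D\rangle$; it is positive because $v<1$. A second delicate point is confirming that the implicit function $\lambda_j(t)$ stays in $(0,\mu_j)$ near $t$, which Lemma~\ref{lemma_L_decrease} ensures. Given these two facts, the implicit function theorem makes $\lambda_j$ differentiable in $t$ and the formula follows.
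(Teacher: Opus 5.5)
Your overall route is the paper's: differentiate $\hat J_i(\mathbf{x},\lambda_i,\lambda_j)=0$ along the trajectory and solve for $\dot\lambda_j$, with differentiability of $\hat J_i$ coming from uniqueness of the minimizing angle pair. The paper cites \cite{deng2023reach} for this step, where you use Danskin. You are also right that the strict monotonicity in Lemma~\ref{lemma_L_decrease}, which is all the paper invokes for $\partial\hat J_i/\partial v_j<0$, only yields $\le 0$. So an extra argument is needed, and supplying one is a real improvement.

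That extra argument has a hole in its first half. The contribution through $\hat{\mathbf{x}}_i$ is $\nabla_{\mathbf{x}_A}Y\cdot\|\tilde{\mathbf{x}}-\mathbf{x}_D\|[\cos\phi^i_1,\sin\phi^i_1]$, where
\[
\nabla_{\mathbf{x}_A}Y=\frac{1}{1-v_j^2}\Bigl([0,1]-v_j\langle\hat{\mathbf{x}}_i-\tilde{\mathbf{x}}\rangle\Bigr).
\]
This gradient has a nonzero $x$-component. You sign only the $y$-part, using $\partial Y/\partial y_A>0$ and $\sin\phi^i_1<0$. For an arbitrary downward heading the full dot product can be positive, e.g.\ $\langle\hat{\mathbf{x}}_i-\tilde{\mathbf{x}}\rangle=[-1,0]$ with a nearly horizontal rightward heading.

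What is missing is the optimality of $\phi^i_1$. For fixed $\theta$, $Y(\cdot,\tilde{\mathbf{x}};v_j)$ is concave in $\mathbf{x}_A$ with nonvanishing gradient. Hence its minimum over the disk of radius $v_j\|\tilde{\mathbf{x}}-\mathbf{x}_D\|$ about $\mathbf{x}_{A_j}$ is attained on the circle. The first-order condition at that boundary minimizer then forces $[\cos\phi^i_1,\sin\phi^i_1]=-\langle\nabla_{\mathbf{x}_A}Y\rangle$, so the first effect equals $-\|\tilde{\mathbf{x}}-\mathbf{x}_D\|\,\|\nabla_{\mathbf{x}_A}Y\|<0$. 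You really only need it to be $\le 0$, because your second effect, $\partial Y/\partial v<0$, is already strict whenever $\hat{\mathbf{x}}_i\neq\tilde{\mathbf{x}}$. With this repair the argument goes through.

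A smaller point concerns the implicit function theorem. It requires $\hat J_i$ to be $C^1$ in a neighbourhood of the current point, and uniqueness of the minimizer at that single point does not give this. You do not need it: differentiability of $\hat J_i$ at $(\mathbf{x}(t),\lambda_j(t))$, together with continuity of $\lambda_j(\cdot)$, already lets you pass to the limit in the difference quotient.
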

\begin{proof}
        If the minimum angles are unique,
then the function $\hat J_i(\mathbf{x},v_i,v_j)$ is differentiable with respect to $\mathbf{x}$ with  
$\langle\frac{\partial \hat J_i}{\partial\mathbf{x}_{A_i}}\rangle=-\boldsymbol\zeta^{i,1}_{A_i}(\mathbf{x},v_i,v_j),
\langle\frac{\partial \hat J_i}{\partial\mathbf{x}_{A_j}}\rangle=-\boldsymbol\xi^{i,1}_{A_j}(\mathbf{x},v_i,v_j),
\langle\frac{\partial \hat J_i}{\partial\mathbf{x}_D}\rangle=\boldsymbol\zeta^{i,1}_{D}(\mathbf{x},v_i,v_j)$ \cite{deng2023reach}.
Moreover, the function $\hat J_i(\mathbf{x},v_i,v_j)$ is differentiable with respect to $v_i,v_j$ since $\frac{\partial\hat J_i}{\partial \hat{\mathbf{x}}_1},\frac{\partial\hat J_i}{\partial \tilde{\mathbf{x}}}, \frac{\partial\hat{\mathbf{x}}_i}{\partial v_i}, \frac{\partial\tilde{\mathbf{x}}}{\partial v_i}$ and $\frac{\partial\tilde{\mathbf{x}}}{\partial v_j}$ exist. 
Consequently, $\frac{\partial \hat J_i}{\partial v_i}<0,\frac{\partial \hat J_i}{\partial v_j}<0$ holds due to Lemma 6.
Therefore, based on the fact that the equation $\hat J_i(\mathbf{x},\lambda_i,\lambda_j)=0$ holds for all $t$, differentiating with respect to time gives
$\dot{\hat J}_i(\mathbf{x},\lambda_i,\lambda_j)=\frac{\partial \hat J_i}{\partial \mathbf{x}_{A_i}}|_{\lambda_i,\lambda_j}\dot{\mathbf{x}}^T_{A_i}+ \frac{\partial \hat J_i}{\partial \mathbf{x}_{A_j}}|_{\lambda_i,\lambda_j}\dot{\mathbf{x}}^T_{A_j} + \frac{\partial \hat J_i}{\partial \mathbf{x}_D}|_{\lambda_i,\lambda_j}\dot{\mathbf{x}}^T_D+\frac{\partial \hat J_i}{\partial v_j}|_{\lambda_i,\lambda_j}\dot{\lambda}_j=0$.
Thus, Eq. (17) can be obtained by substituting the dynamics into this equation.
\end{proof}

\section{Main results}\label{section:main_result}

Based on critical speeds (pairs), this section characterizes the strategic behaviors in the 2A1D reach-avoid game under Assumptions \ref{assumption:maximum speed} and \ref{assumption:observe}.

\subsection{Mismatching order condition of the Defender}

In this subsection, we establish a necessary and sufficient condition under which information about $v_i^{\text{max}}$ would alter the Defender's optimal capture order. 

Given game state $\mathbf{x}$ and the Attackers' maximum speeds $v_i,v_j$, define the optimal payoff when the Defender captures Attacker $A_i$ first
\begin{equation}
        J_D^i(\mathbf{x},v_i,v_j)\triangleq\min_{\gamma_A}\max_{\gamma_D\in\Gamma_D^i(\mathbf{x},v_i)}J(\mathbf{x},\gamma_A,\gamma_D)\label{defender_payoff},
\end{equation}
where $\Gamma_D^i(\mathbf{x},v_i)\triangleq\{\{\langle\mathbf{\tilde{x}}(\mathbf{z}_i,v_i,\theta)-\mathbf{x}_D\rangle\},\theta\in(-\pi,\pi]\}$ represents strategies prioritizing pursuit of $A_i$ at maximum speed. The mismatching order state is defined as follows.

\begin{definition}\label{definition_mismatching}
The state $\mathbf{x}$ is a mismatching order state if there exist speeds $v_{1,1},v_{1,2}>v_1^{\text{pub}},v_{2,1},v_{2,2}>v_2^{\text{pub}}$ such that 
\begin{align}
  &[J_D^1(\mathbf{x},v_{1,1},v_{2,1})-J_D^2(\mathbf{x},v_{2,1},v_{1,1})]\times\nonumber\\ &[J_D^1(\mathbf{x},v_{1,2},v_{2,2})-J_D^2(\mathbf{x},v_{2,2},v_{1,2})]<0.   
\end{align}
The set of all mismatching order states is denoted $\mathcal{C}$.
\end{definition} 

\begin{remark}
        The mismatching order states identify when the Defender's optimal capture order depends on the Attackers' true maximum speeds. In such states, Attackers may conceal their capabilities to exploit information asymmetry. 
\end{remark}

The following theorem presents the necessary and sufficient condition for the mismatching order state.

\begin{theorem}\label{theorem:mismatching}
        The state $\mathbf{x}$ is a mismatching order state if and only if 
        \begin{equation}
             v_1^{\text{pub}}<\mu_1,v_2^{\text{pub}}<\mu_2 
        \end{equation}
\end{theorem}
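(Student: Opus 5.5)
The plan is to reduce everything to the sign of the one-Attacker barrier function $Y(\mathbf{z}_i,v_i)$, i.e. to whether $v_i$ lies above or below the critical speed $\mu_i$ of Lemma \ref{lemma_critical}. The two directions then split cleanly. \emph{Necessity} comes from the fact that an Attacker faster than its critical speed can never be captured. \emph{Sufficiency} comes from choosing one admissible speed below and one above the critical speed for each Attacker.

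The first step is to record bounds on $J_D^i(\mathbf{x},v_i,v_j)$ in terms of $\mu_i$.

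(a) If $v_i>\mu_i$, then $Y(\mathbf{z}_i,v_i)<0$ by the strict monotonicity in Lemma \ref{lemma_critical}. So $\mathbf{z}_i\in\mathcal{R}^0_1$, and by Lemma \ref{lemma:barrier_one} $A_i$ reaches the target regardless of the Defender's strategy. For every $\gamma_D$, whether or not it lies in $\Gamma_D^i$, at most $A_j$ can be captured.

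(b) Every $\gamma_D\in\Gamma_D^i$ moves straight toward a point $\tilde{\mathbf{x}}(\mathbf{z}_i,v_i,\theta)$ on $A_i$'s dominance boundary, so it commits to $A_i$ first. Under (a) such a strategy cannot capture $A_i$, and I claim it yields $J_D^i=0$. If the restricted formulation is instead read as allowing a later switch to $A_j$, I will only use the weaker bound $J_D^i\le 1$ together with $J_D^j\ge J_D^i$ when $A_i$ is uncapturable. This bound follows because the Attackers can let $A_i$ run to the target.

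(c) If $v_i<\mu_i$, then $Y(\mathbf{z}_i,v_i)>0$. By Lemma \ref{lemma:barrier_one}, the Defender strategy \eqref{optimal_strategy_one_Defender_barrier}, which belongs to $\Gamma_D^i$, captures $A_i$ regardless of the Attackers' strategy, so $J_D^i\ge 1$.

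\emph{Necessity.} Suppose $v_1^{\text{pub}}\ge\mu_1$; the case $v_2^{\text{pub}}\ge\mu_2$ is symmetric. Every admissible $v_1>v_1^{\text{pub}}$ then satisfies $v_1>\mu_1$. By (a)–(b), for both speed pairs in Definition \ref{definition_mismatching} we get $J_D^1(\mathbf{x},v_1,v_2)\le J_D^2(\mathbf{x},v_2,v_1)$. Both brackets are therefore nonpositive, their product is nonnegative, and $\mathbf{x}\notin\mathcal{C}$.

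\emph{Sufficiency.} Assume $v_i^{\text{pub}}<\mu_i$ for $i=1,2$. For the first pair, pick $v_{1,1}\in(v_1^{\text{pub}},\mu_1)$ and $v_{2,1}>\max\{v_2^{\text{pub}},\mu_2\}$, taken below $1$ so that the preliminaries apply. By (c), $J_D^1\ge1$; by (a)–(b), $J_D^2=0$. The first bracket is therefore positive. For the second pair, swap the roles: $v_{2,2}\in(v_2^{\text{pub}},\mu_2)$ and $v_{1,2}>\max\{v_1^{\text{pub}},\mu_1\}$. This makes the second bracket negative, so the product is negative and $\mathbf{x}\in\mathcal{C}$.

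The main obstacle is step (b): pinning down what the restricted max over $\Gamma_D^i$ yields when $A_i$ cannot be captured. This matters for necessity, where I need $J_D^1\le J_D^2$ for every admissible speed pair, including pairs with $v_2>\mu_2$ where both quantities may vanish. I would first argue directly from the definition of $\Gamma_D^i$: a Defender heading to $\tilde{\mathbf{x}}(\mathbf{z}_i,v_i,\theta)$ never captures $A_i$ when $Y<0$, so $J_D^i=0\le J_D^j$. Only if that reading fails would I fall back on comparing with $J_D^j$ through the fact that $A_i$ escapes under every Defender strategy.
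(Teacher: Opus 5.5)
\emph{Gap in the sufficiency direction.} Your necessity direction is sound and is essentially the paper's argument. For $v_1>v_1^{\text{pub}}\ge\mu_1$, either $v_2<\mu_2$, in which case $J_D^2\ge 1\ge J_D^1$, or $v_2>\mu_2$, in which case $J_D^1=J_D^2=0$. So every bracket is nonpositive under either reading of $\Gamma_D^i$, and your inequality $J_D^1\le J_D^2$ has the right direction. The ambiguity you flag in (b) is therefore harmless for necessity. It is fatal for sufficiency.

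The paper treats $J_D^i$ as the payoff under the capture order ``$A_i$ first, then $A_j$''. Membership in $\Gamma_D^i$ constrains the Defender only until $A_i$ is removed; after that it may pursue $A_j$. This is why its necessity proof only asserts $J_D^1\le 1$ for an uncapturable $A_1$. It is also why its sufficiency proof explicitly asks whether $A_1$ can still be captured after the Defender has been drawn toward $A_2$.

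Under that reading, (a)--(b) give only $J_D^2(\mathbf{x},v_{2,1},v_{1,1})\le 1$, while (c) gives $J_D^1\ge 1$. Your first bracket therefore need not be positive. For an arbitrary $v_{1,1}\in(v_1^{\text{pub}},\mu_1)$ it can vanish: if $A_1$ is slow and $A_2$'s dominance region is small, chasing $A_2$ costs the Defender little and it captures $A_1$ afterwards, so $J_D^2=J_D^1=1$. Your primary reading, in which a strategy in $\Gamma_D^i$ never turns to $A_j$, makes sufficiency trivial and drops the capture-order content of the mismatching-order definition. It is not the notion the theorem is about.

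\emph{What is missing.} You need a choice of speeds near the critical point together with an explicit argument that $A_1$ still escapes. The paper does this as follows.
\begin{itemize}
\item It takes $v_{1,1}$ just below $\mu_1$ and $v_{2,1}$ just above $\mu_2$.
\item $A_2$ lures the Defender toward $\mathbf{y}_1=(\frac{x_{A_2}-v_{2,1}^2x_D}{1-v_{2,1}^2},0)$, while $A_1$ runs straight to a fixed boundary point, $\mathbf{y}_2$ or $\mathbf{y}_3$.
\item It compares $A_1$'s arrival time $t_1$ with the Defender's earliest arrival $t_D$ through $\mathbf{y}_1$, and argues that $f_1=t_1-t_D<0$ at $(\mu_1,\mu_2)$. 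Heuristically, at $\mu_1$ the Attacker $A_1$ has zero slack against an undistracted Defender, so the forced detour hands it the race.
\item Continuity gives $f_1<0$ on $(\mu_1-\delta,\mu_1)\times(\mu_2,\mu_2+\delta)$. This yields an admissible pair with $J_D^2=0<1=J_D^1$.
\item The symmetric construction gives the opposite sign for the second bracket.
\end{itemize}
Closeness to $\mu_1$ is essential. It is exactly what lets the Defender's wasted time outweigh $A_1$'s speed deficit, which a generic $v_{1,1}<\mu_1$ does not guarantee.
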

\begin{proof}
        ($\Leftarrow$). 
        For speeds $v_{1,1}\in(v_1^{\text{pub}},\mu_1),v_{2,1}\in(\mu_2,1)$, we have $J_D^1(\mathbf{x},v_{1,1},v_{2,1})=1$. In the following, we aim to prove that there exist some $v_{1,1}\in(v_1^{\text{pub}},\mu_1),v_{2,1}\in(\mu_2,1)$ such that $J_D^2(\mathbf{x},v_{2,1},v_{1,1})=0$ holds.
        
        Consider the following strategy for Attackers: $A_2$ first moves towards the point $\mathbf{y}_1\triangleq(\frac{x_{A_2}-v_{2,1}^2x_D}{1-v_{2,1}^2},0)$, then remains at this position until reaching the target when the Defender is sufficiently close. For the Attacker $A_1$, it moves towards $\mathbf{y}_2\triangleq(x_{A_1}-\frac{v_{1,1}y_{A_1}}{\sqrt{1-v_{1,1}^2}},0)$ if $x_{A_1}\le\frac{x_{A_2}-\mu_2^2x_D}{1-\mu_2^2}$, otherwise it moves towards $\mathbf{y}_3\triangleq(x_{A_1}+\frac{v_{1,1}y_{A_1}}{\sqrt{1-v_{1,1}^2}},0)$. Under this strategy, the Defender adopting any $\gamma_D\in\Gamma_D^2(\mathbf{x},v_{2,1})$ is forced to approach $\mathbf{y}_1$ but cannot intercept $A_2$. 
We examine whether $A_1$ can be captured. 

For the case where $x_{A_1}\le\frac{x_{A_2}-\mu_2^2x_D}{1-\mu_2^2}$, the time for $A_1$ to reach $\mathbf{y}_2$ is $t_1(v_{1,1})\triangleq\frac{y_{A_1}}{v_{1,1}\sqrt{1-v_{1,1}^2}}$ and the minimum time for the Defender to reach $\mathbf{y}_2$ via $\mathbf{y}_1$ is 
\begin{align}
    &t_D(v_{1,1},v_{2,1})\triangleq\nonumber\\
    &\sqrt{(\frac{x_D-x_{A_2}}{1-v_{2,1}^2})^2+y_D^2}+|\frac{x_{A_2}-v_{2,1}^2x_D}{1-v_{2,1}^2}-x_{A_1}+\frac{v_{1,1}y_{A_1}}{\sqrt{1-v_{1,1}^2}}|. 
\end{align}
We define the function $f_1(v_1,v_2)\triangleq t_1(v_1)-t_D(v_1,v_2)$ to represent the time difference for $A_1$ with maximum speed and the Defender. Since $f_1(v_1,v_2)$ is a continuous function of $v_1,v_2$ and 
\begin{align}
    &f_1(\mu_1,\mu_2)\nonumber\\
    =&\sqrt{(x_{A_1}-x_D)^2+(1-\mu_1)^2y_D^2}+(x_{A_1}-x_{D})\nonumber\\
    &-\frac{x_{A_2}-x_D}{1-\mu_2^2}-\sqrt{(\frac{x_D-x_{A_1}}{1-\mu_2^2})^2+y_D^2}\nonumber\\
    \le&\sqrt{(\frac{x_{A_2}-x_D}{1-\mu_1^2})^2+(1-\mu_2^2)y_D^2}-\sqrt{(\frac{x_D-x_{A_2}}{1-\mu_2^2})^2+y_D^2}\nonumber\\
    <&0,
\end{align}
there exists a positive number $\delta_1>0$ such that $f_1(v_1,v_2)<0$ holds for $v_1\in(\mu_1-\delta_1,\mu_1),v_2\in(\mu_2,\mu_2+\delta_1)$. Hence, it yields that  $J_D^2(\mathbf{x},v_{1,1},v_{2,1})=0$ for $v_{1,1}\in(\mu_1-\delta_1,\mu_1),v_{2,1}\in(\mu_2,\mu_2+\delta_1)$. Through similar analysis for the case where $x_{A_1}>\frac{x_{A_2}-\mu_2^2x_D}{1-\mu_2^2}$, we can also establish that there exists a positive number $\delta_2>0$ such that $J_D^2(\mathbf{x},v_{2,1},v_{1,1})=0$ for $v_{1,1}\in(\mu_1-\delta_2,\mu_1),v_{2,1}\in(\mu_2,\mu_2+\delta_2)$. 
Consequently, $J_D^2(\mathbf{x},v_{2,1},v_{1,1})=0$ holds for some  $v_{1,1}\in(v_1^{\text{pub}},\mu_1),v_{2,1}\in(\mu_2,1)$.

        Similarly, for some $v_{1,2}\in(\mu_1,1),v_{2,2}\in(v_2^{\text{pub}},\mu_2)$, we obtain $J_D^1(\mathbf{x},v_{1,2},v_{2,2})-J_D^2(\mathbf{x},v_{2,2},v_{1,2})=-1$.
        
        ($\Rightarrow$). We proceed by contradiction and assume $v_1^{\text{pub}}\ge \mu_1$ without loss of generality. Then, $J_D^2(\mathbf{x},v_2,v_1)=1$ and $J_D^1(\mathbf{x},v_1,v_2)\le1$ for all $v_1\ge v_1^{\text{pub}}$ and $v_2\in(0,\mu_2)$, while $J_D^2(\mathbf{x},v_2,v_1)=J_D^1(\mathbf{x},v_1,v_2)=0$ for $v_2>\mu_2$. 
        Therefore, we have $J_D^1(\mathbf{x},v_1,v_2)\ge J_D^2(\mathbf{x},v_2,v_1)$ for all $ v_1\ge v_1^{\text{pub}}, v_2\ge v_2^{\text{pub}}$, contradicting the mismatching order condition. 
\end{proof}

\subsection{Deceptive strategy of the Attackers}\label{section:strategy}

In this subsection, we derive the deceptive strategy for the Attackers to exploit the information advantage in mismatching order states.  
Note that Attackers employ deception only when tangible benefits are available. When $J^*(\mathbf{x},v_1^{\text{max}},v_2^{\text{max}})=0$, no improvement is possible, therefore eliminating deceptive motivation. Therefore, given state $\mathbf{x}$, deception is beneficial if and only if $J^*(\mathbf{x},v_1^{\text{max}},v_2^{\text{max}})\ge1$.
We define the motivation state set $\mathcal{M}$ as states satisfying this condition. Consequently, the Attackers are motivated to deceive if and only if $v_1^{\text{max}}\le\mu_1$ or $v_2^{\text{max}}\le\mu_2$.

To deceive the Defender to adopt a suboptimal strategy, the idea of information-limiting strategy is formulated in \cite{shishika2024deception}, which is proposed to prevent the Defender from inferring the Attackers' speed based on their observed Attackers' historical trajectories and speed. In this letter, we consider an attack strategy  
\begin{align}
        &\mathbf{u}_{A_1}=v_1^{\text{pub}}\boldsymbol\eta_{A_1}(\mathbf{z}_1,\mu_1)\label{deceptive_strategy_a1}\\
        &\mathbf{u}_{A_2}=v_2^{\text{pub}}\boldsymbol\eta_{A_2}({\mathbf{z}_2},\mu_2),\label{deceptive_strategy_a2}
\end{align} 

Thus, this slow-speed strategy clearly qualifies as an information-limiting strategy since it is independent of $v_1^{\text{max}},v_2^{\text{max}}$.

However, the information-limiting slow-speed strategies may increase the payoff, creating strategic risk. 
Since $J^*(\mathbf{x}(t),v_{\text{max}})=2$ already represents the maximum payoff, this case only occurs when $J^*(\mathbf{x}(t),v_1^{\text{max}},v_2^{\text{max}})=1$. To address this type of risk, we introduce warning states for the Attackers, in which the maximum speed strategy should be employed to prevent any increase of the payoff.     

\begin{definition}
        The state $\mathbf{x}$ is a warning state if 
        \begin{equation}
                -\varepsilon\le \max\{\hat J_1(\mathbf{x},v_1^{\text{max}},v_2^{\text{max}}),\hat J_2(\mathbf{x},v_2^{\text{max}},v_1^{\text{max}})\}<0,\label{warning_condition}
        \end{equation}
where $\varepsilon>0$ is a predetermined warning threshold.
The set of all warning states is denoted $\mathcal{W}$.
\end{definition}

\begin{remark}
The condition \eqref{warning_condition} also indicates that 
once the game state transitions from $J^*=2$ to $J^*=1$, Attackers must reveal their true speed to prevent the Defender from revisiting its strategy, which may reestablish the payoff of 2.
\end{remark}

\subsection{The guess of the Defender}\label{section:guess}

As pointed out in \cite{wu2025deception}, the essence of the Defender's guessing behavior is to exclude possible maximum speeds based on the observed game trajectory. Motivated by this observation, we extend the notion of a guess to the case of heterogeneous maximum speeds.

\begin{definition}\label{definition:guess}
        Given a trajectory of game states $\mathbf{x}_{[0,T]}=\{\mathbf{x}(t)|t\in[0,T]\}$ with $T>0$, the Defender is said to have made a guess that $(v_1^{\text{max}},v_2^{\text{max}})\neq(\nu_1,\nu_2)$ at the instant $t<T$ if there exist $\nu_1\in [v_1^{\text{pub}},1),\nu_2\in [v_2^{\text{pub}},1)$ and $\delta>0$ such that $J^*(\mathbf{x}(t^\prime),\nu_1,\nu_2)>J^*(\mathbf{x}(t^{\prime\prime}),\nu_1,\nu_2)$,
        $\forall t^\prime\in(0,t),t^{\prime\prime}\in(t,t+\delta)$. 
\end{definition}

Definition \ref{definition:guess} characterizes scenarios where the Defender's control input $\mathbf{u}_D$ at time $t$ deteriorates optimal payoff if Attackers' maximum speed is $(\nu_1,\nu_2)$. This decrease indicates that the Defender has excluded speed $(\nu_1,\nu_2)$, i.e., made a guess that $(v_1^{\text{max}},v_2^{\text{max}})\neq(\nu_1,\nu_2)$. If this exclusion is incorrect, Attackers can immediately exploit their true maximum speed $(v_1^{\text{max}},v_2^{\text{max}})$ for better payoff without deception. Otherwise, Attackers should maintain their deceptive strategy. To render this definition more tractable, the following lemmas are provided. 
\begin{lemma}\label{lemma:guess}
        Given a trajectory of game states $\mathbf{x}_{[0,T]}=\{\mathbf{x}(t)|t\in[0,T]\}$ with $T>0$ such that $\mathbf{x}\in\mathcal{C},\forall t\in[0,T]$, if
        \begin{equation}
        \min_{t\in[0,T]}\mu_i(t)<\mu_i(0)\label{mu_i_decrease},   
        \end{equation}
        the Defender has made guesses that $(v_i^{\text{max}},v_j^{\text{max}})\notin\mathcal{S}_{\mu_i}(\mathbf{x}_{[0,T]})\triangleq\{(\nu_i,\nu_j)|\nu_i=\mu_i(t)\in(\mu_i(t^\prime),\mu_i(0)],\nu_j>\mu_j(t),\exists\delta_{\mu}>0, \forall t\in[0,T),t^\prime\in(t,t+\delta_{\mu})\}$. 
\end{lemma}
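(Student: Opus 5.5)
The plan is to reduce the statement to Definition~\ref{definition:guess} by showing, for each $(\nu_i,\nu_j)\in\mathcal{S}_{\mu_i}(\mathbf{x}_{[0,T]})$, that $J^*(\mathbf{x}(\cdot),\nu_i,\nu_j)$ drops from $1$ to $0$ across the instant $t$ at which $\mu_i(t)=\nu_i$ and $\mu_i$ is about to decrease. First, fix $(\nu_i,\nu_j)\in\mathcal{S}_{\mu_i}$, with associated $t\in[0,T)$ and $\delta_\mu>0$. Then $\nu_i=\mu_i(t)$ and $\mu_i(t')<\mu_i(t)$ for all $t'\in(t,t+\delta_\mu)$. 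Condition~\eqref{mu_i_decrease} together with continuity of $\mu_i$ along the trajectory guarantees that such instants exist, so the set $\mathcal{S}_{\mu_i}$ is nonempty and the claim is meaningful. We also have $\nu_i\le\mu_i(0)$.

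Second, we evaluate $J^*$ after time $t$. For $t''\in(t,t+\delta_\mu)$ we have $\nu_i>\mu_i(t'')$, so by Lemma~\ref{lemma_critical} we get $Y(\mathbf{z}_i(t''),\nu_i)<0$, and $A_i$ alone wins the 1A1D game against $D$. Since $\nu_j>\mu_j(t)$, continuity of $\mu_j$ (Lemma~\ref{lemma:rate_mu_k} shows its rate is bounded) lets us shrink $\delta$ so that $\nu_j>\mu_j(t'')$ as well. Both Attackers are then outside their 1A1D capture regions. A single Defender cannot capture either Attacker, because each can simply play its 1A1D optimal evasion from Lemma~\ref{lemma:barrier_one}. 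Hence $J^*(\mathbf{x}(t''),\nu_i,\nu_j)=0$.

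Third, we evaluate $J^*$ before time $t$. For $t'\in(0,t)$ we need $J^*(\mathbf{x}(t'),\nu_i,\nu_j)\ge1$, that is, $\nu_i\le\mu_i(t')$, so that $A_i$ with speed $\nu_i$ is capturable by Lemma~\ref{lemma:barrier_one} and Lemma~\ref{lemma_critical}. Here I would use the interval condition $\nu_i\in(\mu_i(t'),\mu_i(0)]$ in the definition of $\mathcal{S}_{\mu_i}$. I read it as requiring that $\mu_i(t)$ be a new running minimum being crossed downward at $t$, i.e.\ $\mu_i(s)\ge\mu_i(t)$ for all $s\le t$. Then $\nu_i=\mu_i(t)\le\mu_i(t')$ for $t'<t$, so the Defender would capture $A_i$, giving $J^*\ge 1$.

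If $\nu_i=\mu_i(t')$ holds with equality at some earlier $t'$, the state sits on the barrier $\mathcal{B}_1$. In that case I would invoke the Defender's barrier strategy \eqref{optimal_strategy_one_Defender_barrier}, which still secures capture, so $J^*\ge1$ persists. Combining the two evaluations, $J^*(\mathbf{x}(t'),\nu_i,\nu_j)\ge1>0=J^*(\mathbf{x}(t''),\nu_i,\nu_j)$, which is exactly a guess in the sense of Definition~\ref{definition:guess}.

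The main obstacle is the third step: making the set definition precise enough that $\nu_i\le\mu_i(t')$ holds for all $t'\in(0,t)$, and not merely near $t$. If $\mu_i$ oscillates, I would restrict to times $t$ where $\mu_i$ attains its running minimum, and argue via \eqref{rate_mu_k_1} and continuity that the strict decrease immediately after such a time matches the membership condition of $\mathcal{S}_{\mu_i}$. The assumption $\mathbf{x}\in\mathcal{C}$ enters through Theorem~\ref{theorem:mismatching}: it guarantees $v_i^{\text{pub}}<\mu_i(t)$, so each such $\nu_i$ is an admissible speed in $[v_i^{\text{pub}},1)$, as Definition~\ref{definition:guess} requires.
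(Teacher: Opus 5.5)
\textbf{There is a gap: your argument covers only part of $\mathcal{S}_{\mu_i}$.}

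Your before/after comparison matches the paper's $k=1$ step. At the first downward crossing $t_1$ of a level $\hat\nu_i$ we have $\hat\nu_i=\min_{[0,t_1]}\mu_i$. So $J^*\ge1$ on $(0,t_1)$, while $J^*=0$ just after $t_1$ for every $\nu_j>\mu_j(t_1)$.

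The problem is how you read $\mathcal{S}_{\mu_i}$. The condition $\mu_i(t)\in(\mu_i(t'),\mu_i(0)]$ refers to $t'\in(t,t+\delta_\mu)$, i.e.\ \emph{future} times. It only says $\mu_i$ decreases right after $t$; it does not say $\mu_i(s)\ge\mu_i(t)$ for $s<t$. Hence $\mathcal{S}_{\mu_i}$ also contains pairs $(\hat\nu_i,\nu_j)$ witnessed only at a later downward crossing $t_k$ of the same level, after $\mu_i$ has dipped below $\hat\nu_i$ and recovered, with $\mu_j(t_k)<\nu_j\le\min_{l<k}\mu_j(t_l)$. For such pairs your scheme produces no guess:
\begin{itemize}
\item Step two cannot be run at $t_1,\dots,t_{k-1}$, since it needs $\nu_j>\mu_j(t_l)$ and here $\nu_j\le\mu_j(t_l)$.
\item Step three fails at $t_k$, because there are $t'<t_k$ (just after $t_1$) with $\mu_i(t')<\hat\nu_i$, so you cannot get $J^*\ge1$ from $A_i$.
\end{itemize}
Restricting to running-minimum times, as you propose, therefore proves the lemma only for a proper subset of $\mathcal{S}_{\mu_i}$.

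\textbf{How the paper handles it.} The paper enumerates all downward crossings $t_1<\dots<t_p$ of $\hat\nu_i$. It attributes to $t_k$ only the new band $\nu_j\in(\mu_j(t_k),\min_{l<k}\mu_j(t_l)]$, and nothing new if $\mu_j(t_k)\ge\min_{l<k}\mu_j(t_l)$. The union over $k$ is exactly the set of $\nu_j$ exceeding some $\mu_j(t_k)$. If you adopt that route, note that $J^*\ge1$ on $(0,t_k)$ must then come from $A_j$ on the intervals where $\mu_i<\hat\nu_i$. That requires $\nu_j\le\mu_j$ on those whole intervals, not only at the crossing instants.

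\textbf{A cleaner repair.} Decouple the guess instant from the witnessing time $t$. For $(\nu_i,\nu_j)\in\mathcal{S}_{\mu_i}$:
\begin{itemize}
\item $J^*(\mathbf{x}(0),\nu_i,\nu_j)\ge1$, because $\nu_i\le\mu_i(0)$.
\item Your step two gives $J^*=0$ on $(t,t+\delta)$.
\item The set $O=\{s:\mu_i(s)<\nu_i,\ \mu_j(s)<\nu_j\}$ is open. Take $\tau$ as the left endpoint of its leftmost component; this needs finitely many components, as the paper's finite crossing set implicitly assumes.
\item On $(0,\tau)$ you are outside $O$, so one Attacker is capturable (with your barrier convention at equality), giving $J^*\ge1$. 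On a right neighbourhood of $\tau$, $J^*=0$.
\end{itemize}
This is a guess in the sense of Definition~\ref{definition:guess}. Since the lemma only asserts that these guesses are made, not when, this suffices.
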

\begin{proof}
        Since $\mathbf{x}\in\mathcal{C},\forall t\in[0,T]$, it has that $v_i^{\text{known}}<\mu_i(t), \forall t\in[0, T]$ for $i=1,2$.
If the condition \eqref{mu_i_decrease} holds, then for every value $\hat\nu_i\in(\mu_{i,\min}^{[0,T]},\mu_i(0)]$, there exists a set of the time instants $\Lambda=\{t_{1},t_{2},\cdots,t_{p}\}$, where $p$ is a positive integer, such that $\mu_i(t_{\lambda})=\hat\nu_i$ and $\dot{\mu}_i(t_{\lambda})<0$, $\forall t_{\lambda}\in\Lambda$.
Then there exists a positive number $\bar\delta_k>0$ such that $\mu_i(t)<\hat\nu_i$ for $t\in(t_{k}, t_{k}+\bar\delta_k)$. Then $\forall\epsilon_k>0$, there exists a positive number $\delta_k\in(0,\bar\delta_k]$ such that $\mu_j(t_k)+\epsilon_k>\mu_j(t)$ for $t\in(t_{k}, t_{k}+\delta_k)$.

For $k=1$, it has that $\hat\nu_i=\min_{t\in[0,t_1]}\mu_i(t)$. Correspondingly, there is $\forall  t^{\prime}\in(0,t_1),t^{\prime\prime}\in(t_1,t_1+\delta_1)$, for all $\nu_j\ge\mu_j(t_1)+\epsilon_k, j =\{1,2\}\setminus i$,
\begin{equation}
    J^*(\mathbf{x}(t^{\prime}), \beta_1,\beta_2)\ge1>J^*(\mathbf{x}(t^{\prime\prime}), \beta_1,\beta_2)=0. \label{lemma9_proof1}
\end{equation} 
where
\begin{equation}
    (\beta_1,\beta_2)=\left\{ \begin{array}{cc}
        &(\hat\nu_1,\nu_2),i=1\\
        &(\nu_1,\hat\nu_2),i=2
    \end{array}\right.
\end{equation}
Since $\epsilon>0$ is arbitrary, thus, the Defender has made guesses that $(v_i^{\text{max}},v_j^{\text{max}})\notin\{(\hat\nu_i,\nu_j)|\nu_j>\mu_j(t_1)\}$ at the instant $t_1$ based on Definition 5. 

For $k\ge2$, if $\mu_j(t_k)\ge\min_{l=1,\cdots,k-1}\mu_j(t_l)$, then the Defender does not make a guess at the instant $t_k$ since the guesses $(v_i^{\text{max}},v_j^{\text{max}})\notin\{(\hat\nu_i,\nu_j)|\nu_j>\mu_j(t_k)\}$ have already been made before $t_k$. Otherwise,  $\forall  t^{\prime}\in(0,t_k),t^{\prime\prime}\in(t_k,t_k+\delta_k)$, for all $\nu_j\in[\mu_j(t_k)+\epsilon_k,\min_{l=1,\cdots,k-1}\mu_j(t_l)]$, it also has that \eqref{lemma9_proof1} holds. Therefore, the Defender has made guesses that $(v_i^{\text{max}},v_j^{\text{max}})\notin\{(\hat\nu_i,\nu_j)|\nu_j\in(\mu_j(t_k),\min_{l=1,\cdots,k-1}\mu_j(t_l)]\}$ at the instant $t_k$ based on Definition 5. This completes the proof.
\end{proof}

\begin{lemma}\label{lemma:guess2}
        Given a trajectory of game states $\mathbf{x}_{[0,T]}=\{\mathbf{x}(t)|t\in[0,T]\}$ with $T>0$ such that $\mathbf{x}\in\mathcal{C},\forall t\in[0,T]$ and the speed $h_2(\mathbf{x}(0),\mu_2(0))<\nu_1<\mu_1(0)$, there exists time $0<T_1\le T$ such that $\nu_1\ge h_2(\mathbf{x}(t),\mu_2(t)),\forall t\in[0,T_1]$, and if 
        \begin{equation}
         h_{\nu_1,\min}^{[0,T_1]}\triangleq\min_{t\in[0,T_1]}h(\mathbf{x}(t),\nu_1)<h(\mathbf{x}(0),\nu_1),\label{h_decrease}  
        \end{equation}
        then the Defender has made guesses that $(v_1^{\text{max}},v_2^{\text{max}})\notin\mathcal{S}_{h,\nu_1}(\mathbf{x}_{[0,T]})\triangleq\{(\nu_1,\nu_2)|\nu_2\in(h_{\nu_1,\min}^{[0,T_1]},h(\mathbf{x}(0),\nu_1)]\}$. 
\end{lemma}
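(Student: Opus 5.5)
The argument follows the proof of Lemma~\ref{lemma:guess}, with the one-dimensional critical speed $\mu_i$ replaced by the boundary curve $h(\mathbf{x}(t),\cdot)$ in the $(v_1,v_2)$-plane. The relevant fact is Lemma~\ref{lemma:h_property}: at the state $\mathbf{x}(t)$, the value $J^*(\mathbf{x}(t),\nu_1,\nu_2)$ equals $2$ when $\nu_2<h(\mathbf{x}(t),\nu_1)$ and is at most $1$ when $\nu_2>h(\mathbf{x}(t),\nu_1)$. A drop of $h(\mathbf{x}(t),\nu_1)$ below a level $\nu_2$ therefore takes the pair $(\nu_1,\nu_2)$ from the $J^*=2$ region to the $J^*\le1$ region. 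By Definition~\ref{definition:guess}, this is exactly a guess that $(v_1^{\text{max}},v_2^{\text{max}})\neq(\nu_1,\nu_2)$.

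\textbf{Step 1: existence of $T_1$.} Lemma~\ref{lemma:h_property} is only stated on $v_1\in(h_2(\mu_2),\mu_1)$, so I first need an interval on which $\nu_1$ stays in this range. By assumption $h_2(\mathbf{x}(0),\mu_2(0))<\nu_1<\mu_1(0)$. The map $t\mapsto\mathbf{x}(t)$ is continuous because the dynamics are bounded. The map $\mathbf{x}\mapsto\mu_i$ is continuous by Lemma~\ref{lemma_critical} and the implicit function argument behind Lemma~\ref{lemma:rate_mu_k}. The map $(\mathbf{x},v)\mapsto h_2(\mathbf{x},v)$ is continuous by the continuity of $\hat J_2$ (Berge) together with the strict monotonicity in Lemma~\ref{lemma_L_decrease}. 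Since the inequalities are strict at $t=0$, they persist on some $[0,T_1]$ with $T_1\le T$. I will also shrink $T_1$ if necessary so that $\nu_1<\mu_1(t)$ throughout. The same continuity makes $t\mapsto h(\mathbf{x}(t),\nu_1)$ continuous on $[0,T_1]$, so the minimum in \eqref{h_decrease} is attained.

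\textbf{Step 2: level-crossing argument.} Fix any $\hat\nu_2\in(h^{[0,T_1]}_{\nu_1,\min},h(\mathbf{x}(0),\nu_1)]$ and set $g(t)=h(\mathbf{x}(t),\nu_1)$. By continuity, $g$ attains the value $\hat\nu_2$ and later dips below it. Define $t_1=\sup\{t\in[0,T_1]: g(\tau)\ge\hat\nu_2\ \forall\tau\le t\}$, where I take the level slightly above $\hat\nu_2$ when $\hat\nu_2$ is the endpoint $h(\mathbf{x}(0),\nu_1)$. Then $g\ge\hat\nu_2$ on $[0,t_1]$, and there are times arbitrarily close after $t_1$ at which $g<\hat\nu_2$.

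Definition~\ref{definition:guess} requires a whole interval $(t_1,t_1+\delta)$ on which $g<\hat\nu_2$. To get it, I will argue as in Lemma~\ref{lemma:guess} by choosing crossing instants with $\dot g<0$. If the minimizing angle pair is unique, Lemma~\ref{lemma:rate_lambda_i} makes $g$ differentiable, with $\dot g$ expressed through \eqref{rate_lambda_i_1} or, on the flat piece $\bar h_2=\mu_2$, through \eqref{rate_mu_k_1}. Otherwise I approximate the level $\hat\nu_2$ from above by levels crossed transversally and pass to the limit.

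\textbf{Step 3: the guess.} Before $t_1$ we have $\hat\nu_2<g$ or $\hat\nu_2=g$ with $g$ about to drop. Using strict monotonicity and perturbing the level $\hat\nu_2$ by $\epsilon$ as in the proof of Lemma~\ref{lemma:guess}, I obtain $J^*(\mathbf{x}(t'),\nu_1,\hat\nu_2)=2$ for $t'\in(0,t_1)$. After $t_1$, on $(t_1,t_1+\delta)$, we have $g<\hat\nu_2$, so $J^*(\mathbf{x}(t''),\nu_1,\hat\nu_2)\le1$. This is the strict decrease required by Definition~\ref{definition:guess}. Since $\hat\nu_2$ was arbitrary, the Defender has guessed $(v_1^{\text{max}},v_2^{\text{max}})\notin\mathcal{S}_{h,\nu_1}$.

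\textbf{Main obstacle.} The difficult step is Step 2. Producing a genuine open interval after the crossing time requires control on how $g$ crosses the level, in particular at the point where $h$ switches between $h_1$ and $\bar h_2$ and at states where $m_i>1$, where $h$ need not be differentiable. I would handle this by the $\epsilon$-perturbation of levels used in the proof of Lemma~\ref{lemma:guess}, which avoids needing derivatives at the exact level.
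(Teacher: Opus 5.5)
Your proposal takes the same route as the paper. Continuity of $h_2$ and $\mu_i$ along the trajectory gives $T_1$; then, for each level $\hat\nu_2\in(h_{\nu_1,\min}^{[0,T_1]},h(\mathbf{x}(0),\nu_1)]$, you take the first instant after which $t\mapsto h(\mathbf{x}(t),\nu_1)$ stays strictly below $\hat\nu_2$ on a right-neighbourhood, and Lemma~\ref{lemma:h_property} turns $J^*(\cdot,\nu_1,\hat\nu_2)$ from $2$ into at most $1$ there, which is a guess in the sense of Definition~\ref{definition:guess}.

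The paper handles the regularity issue you raise in Step~2 only by asserting a finite set $\Lambda$ of such downcrossing instants and taking $t_m=\min\Lambda$. It also ignores earlier touchings of the level, where $J^*$ sits on the boundary, so your extra care goes beyond the paper rather than exposing a gap relative to it. Do drop the remark about taking a level slightly above $h(\mathbf{x}(0),\nu_1)$: for such a level the set defining $t_1$ is empty.
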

\begin{proof}
        The existence of $T_1\in(0,T]$ satisfying $h_2(\mathbf{x}(t),\mu_2(t))\le\nu_1$ for all $t\in[0,T_1]$ follows directly from the initial condition $h_2(\mathbf{x}(0),\mu_2(0))<\nu_1<\mu_1(0)$ and the continuity of $h_2(\mathbf{x},v)$ and $\mu_i(\mathbf{x}(t))$ for $i=1,2$.

If the condition \eqref{h_decrease} holds, then for any value $\hat h\in(h_{\nu_1,\min}^{[0,T_1]},h(\mathbf{x}(0),\nu_1)]$, there exists a set of the time instants $\Lambda=\{t_{1},t_{2},\cdots,t_p\}$, where $p$ is a positive integer, such that $h(\mathbf{x}(t_H),\nu_1)=\hat h$ and $\exists\delta_H>0,h(\mathbf{x}(t^\prime),\nu_1)<h(\mathbf{x}(t_H),\nu_1)$, $\forall t^\prime\in(t_H,t_H+\delta_H), \forall t_H\in\Lambda$.
Take $t_m=\min\Lambda$, then $\exists\delta_m>0$ such that $h(\mathbf{x}(t),\nu_1)<\hat h$ for $t\in(t_{m}, t_{m}+\delta_m)$ with $\hat h=\min_{t\in[0,t_m]} h(\mathbf{x}(t),\nu_1)$. Correspondingly, there is $\forall  t^{\prime}\in(0,t_m),t^{\prime\prime}\in(t_m,t_m+\delta_m)$,
\begin{equation}
    2=J^*(\mathbf{x}(t^{\prime}), \nu_1,\hat h)>1\ge J^*(\mathbf{x}(t^{\prime\prime}),\nu_1 ,\hat h). 
\end{equation} 

Therefore, the Defender makes a guess that $(v_1^{\text{max}},v_2^{\text{max}})\neq(\nu_1,\hat h)$ at the instant $t_m$ based on Definition 5. 
This completes the proof.
\end{proof}

\subsection{The dilemma}

In this subsection, we present the main results of this letter, which address the questions posed in Section \ref{section:problem_formulation}.
To enhance clarity, we define the function $C(x,y)\triangleq \cos(\cos^{-1}(x)+\cos^{-1}(y))$.
Further, we introduce forcing states, denoted as $\mathcal{Q}\triangleq\mathcal{Q}_0\cup\mathcal{Q}_{1}\cup\mathcal{Q}_{2}$ to characterize the states in which the Defender is compelled to make guesses, where
\begin{align}
        \mathcal{Q}_0\triangleq\big\{&\mathbf{x}|\boldsymbol\eta_{D}(\mathbf{z}_1,\mu_1)\cdot\boldsymbol\eta_{D}(\mathbf{z}_2,\mu_2)<C(\tfrac{v_1^{\text{pub}}}{\mu_1},\tfrac{v_2^{\text{pub}}}{\mu_2})\big\}\\
        \mathcal{Q}_{1}\triangleq\big\{&\mathbf{x}|\exists\nu_1\in(h_2(\mu_2), \mu_1),(\nu_1,h(\nu_1))>(v_1^{\text{pub}},v_2^{\text{pub}}),\nonumber\\
        &\boldsymbol\zeta_{D}^{1,l}(\mathbf{x},\nu_1,h(\nu_1))\cdot\boldsymbol\eta_{D}(\mathbf{z}_2,\mu_2)<C(\tfrac{v_2^{\text{pub}}}{\mu_2},\sigma_1(\nu_1)),\nonumber\\
        &\forall l\in\{1,\cdots,m_1\},h_1(\nu_1)> h_2^{-1}(\nu_1)\big\},\\
        \mathcal{Q}_{2}\triangleq\big\{&\mathbf{x}|\exists\nu_1\in(h_2(\mu_2), \mu_1),(\nu_1,h(\nu_1))>(v_1^{\text{pub}},v_2^{\text{pub}}),\nonumber\\
        &\boldsymbol\zeta_{D}^{2,l}(\mathbf{x},h(\nu_1),\nu_1)\cdot\boldsymbol\eta_{D}(\mathbf{z}_1,\mu_1)
        <C(\tfrac{v_1^{\text{pub}}}{\mu_1},\sigma_2(\nu_1))\nonumber\\
        &\forall l\in\{1,\cdots,m_2\},h_1(\nu_1)< h_2^{-1}(\nu_1)\big\}
\end{align}
and $\sigma_1(\nu_1)\triangleq -\|\mathbf{C}_{\lambda,14}|_{\nu_1,h(\nu_1)}\|^{-1}\cdot\big(v_1^{\text{pub}}\boldsymbol{\eta}_{A_1}(\mathbf{z}_1,\mu_1)\cdot\mathbf{C}_{\lambda,12}|_{\nu_1,h(\nu_1)}+v_2^{\text{pub}}\boldsymbol{\eta}_{A_2}(\mathbf{z}_2,\mu_2)\cdot\mathbf{C}_{\lambda,13}|_{\nu_1,h(\nu_1)}\big)$, $\sigma_2(\nu_1)\triangleq -\|\mathbf{C}_{\lambda,24}|_{h(\nu_1),\nu_1}\|^{-1}\cdot\big(v_1^{\text{pub}}\boldsymbol{\eta}_{A_1}(\mathbf{z}_1,\mu_1)\cdot\mathbf{C}_{\lambda,22}|_{h(\nu_1),\nu_1}+v_2^{\text{pub}}\boldsymbol{\eta}_{A_2}(\mathbf{z}_2,\mu_2)\cdot\mathbf{C}_{\lambda,23}|_{h(\nu_1),\nu_1}\big)$.

The following lemma supports subsequent analysis.
\begin{lemma}[\cite{wu2026inducing}]\label{lemma:inequality_vector}
        Let $\mathbf{e}_1,\mathbf{e}_2\in\mathbb{R}^2$ be unit vectors such that $\mathbf{e}_1\cdot\mathbf{e}_2<\cos(a+b)$, where $a,b\in(0,\frac{\pi}{2})$. Then no unit vector $\mathbf{x}\in\mathbb{R}^2$ satisfies both $\mathbf{x}\cdot\mathbf{e}_1\ge\cos a$ and $\mathbf{x}\cdot\mathbf{e}_2\ge\cos b$.
\end{lemma}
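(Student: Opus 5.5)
We argue by contradiction, converting the two inner-product lower bounds into angular statements and adding the angles. Suppose a unit vector $\mathbf{x}$ satisfies $\mathbf{x}\cdot\mathbf{e}_1\ge\cos a$ and $\mathbf{x}\cdot\mathbf{e}_2\ge\cos b$.

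\textbf{Step 1 (angles).} Since all three vectors are unit vectors in $\mathbb{R}^2$, define the unsigned angles
\[
\alpha=\angle(\mathbf{x},\mathbf{e}_1)=\cos^{-1}(\mathbf{x}\cdot\mathbf{e}_1)\in[0,\pi],\qquad
\beta=\angle(\mathbf{x},\mathbf{e}_2)=\cos^{-1}(\mathbf{x}\cdot\mathbf{e}_2)\in[0,\pi].
\]
Because $\cos^{-1}$ is strictly decreasing on $[-1,1]$, the assumptions give $\alpha\le a$ and $\beta\le b$.

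\textbf{Step 2 (triangle inequality for angles).} Let $\gamma=\cos^{-1}(\mathbf{e}_1\cdot\mathbf{e}_2)\in[0,\pi]$. The unsigned angle between unit vectors is the geodesic distance on the unit circle, so it satisfies the triangle inequality:
\[
\gamma\le\alpha+\beta\le a+b .
\]
One way to see this is to write $\mathbf{e}_1,\mathbf{e}_2,\mathbf{x}$ via polar angles. The circular distance $\min(|\theta_1-\theta_2|,\,2\pi-|\theta_1-\theta_2|)$ is then a metric on the circle. This is the only step requiring care, because angles wrap around modulo $2\pi$; using the circular distance handles the wrap-around directly.

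\textbf{Step 3 (conclusion).} Since $a,b\in(0,\frac{\pi}{2})$, we have $a+b\in(0,\pi)$. On $[0,\pi]$ the cosine is strictly decreasing, and both $\gamma$ and $a+b$ lie in this interval. From $\gamma\le a+b$ we therefore get
\[
\mathbf{e}_1\cdot\mathbf{e}_2=\cos\gamma\ge\cos(a+b).
\]
This contradicts the hypothesis $\mathbf{e}_1\cdot\mathbf{e}_2<\cos(a+b)$, so no such unit vector $\mathbf{x}$ exists.

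The restriction $a,b<\frac{\pi}{2}$ is used only to keep $a+b$ inside the interval where cosine is monotone.
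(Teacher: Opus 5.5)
The paper does not prove this lemma. It is quoted from \cite{wu2026inducing} and used as a black box, so there is no in-paper argument to compare against. Your proof is correct and self-contained:
\begin{itemize}
\item Step~1 is valid because $a,b\in[0,\pi]$ gives $\cos^{-1}(\cos a)=a$ and $\cos^{-1}(\cos b)=b$.
\item Step~2 is the standard fact that $\min_{k\in\mathbb{Z}}|\theta_1-\theta_2+2\pi k|$ is a metric on the circle. It coincides with $\cos^{-1}(\mathbf{u}\cdot\mathbf{v})$ for the corresponding unit vectors.
\item Step~3 correctly uses that cosine is monotone on $[0,\pi]$.
\end{itemize}

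You could instead avoid the metric and wrap-around discussion by expanding in the orthonormal frame $\{\mathbf{x},\mathbf{x}^\perp\}$. Write $c_k=\mathbf{x}\cdot\mathbf{e}_k$ and $\mathbf{e}_k=c_k\mathbf{x}+s_k\mathbf{x}^\perp$ with $s_k^2=1-c_k^2$. Then
\[
\mathbf{e}_1\cdot\mathbf{e}_2=c_1c_2+s_1s_2\ge c_1c_2-\sqrt{1-c_1^2}\sqrt{1-c_2^2}=\cos\bigl(\cos^{-1}c_1+\cos^{-1}c_2\bigr)\ge\cos(a+b),
\]
where the last step uses $\cos^{-1}c_1+\cos^{-1}c_2\le a+b<\pi$. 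This is in effect a one-line proof of your angular triangle inequality. It also produces exactly the function $C(x,y)$ the paper uses to define $\mathcal{Q}_0,\mathcal{Q}_1,\mathcal{Q}_2$. Your version is more geometric but relies on the standard metric property rather than proving it.

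Both arguments confirm your closing observation: only $a,b\ge 0$ and $a+b\le\pi$ are needed, and this is sharp. For $a+b\in(\pi,2\pi)$, take $\mathbf{e}_2=-\mathbf{e}_1$ and any unit $\mathbf{x}$ with $\mathbf{x}\cdot\mathbf{e}_1\in[\cos a,-\cos b]$; this interval is nonempty because $a+b\ge\pi$.

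That weaker, exact hypothesis is the one worth checking where the paper applies the lemma in $\mathcal{Q}_1$ and $\mathcal{Q}_2$. There one threshold is $\sigma_i(\nu_1)$, which, unlike $v_i^{\text{pub}}/\mu_i$, the paper never shows to lie in $(0,1)$.
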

\begin{theorem}\label{theorem:dilemma}
        Consider the game formulated in Section \ref{section:problem_formulation} under Assumption \ref{assumption:maximum speed}, where the initial state satisfies $\mathbf{x}(0)\in(\mathcal{C}\cap\mathcal{M}\cap\mathcal{Q})\setminus\mathcal{W}$. The information-limiting strategy \eqref{deceptive_strategy_a1}-\eqref{deceptive_strategy_a2} proposed for the Attackers creates a dilemma for the Defender. Specifically, 
        before the Defender can definitively determine its optimal capture order (i.e., when $\mathbf{x}\notin(\mathcal{C}\cap\mathcal{M})\setminus\mathcal{W}$), it is compelled to make guesses that
        the maximum speed of the Attackers $(v_1^{\text{max}},v_2^{\text{max}})$ does not belong to a non-empty subset of $[0,1]\times[0,1]\setminus (\mu_1(0),1]\times(\mu_2(0),1]$.
        Furthermore, an incorrect guess by the Defender will reduce the payoff, while correct guesses provide no additional benefit.
        Additionally, the Attackers incur no risk when implementing the proposed strategy.
\end{theorem}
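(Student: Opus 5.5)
The plan is to split the proof according to which forcing set contains $\mathbf{x}(0)$, namely $\mathcal{Q}_0$, $\mathcal{Q}_1$ or $\mathcal{Q}_2$. In each case we compute, under the slow-speed strategy \eqref{deceptive_strategy_a1}--\eqref{deceptive_strategy_a2}, the time derivatives of the relevant critical quantities. We then show by Lemma \ref{lemma:inequality_vector} that no admissible $\mathbf{u}_D$ keeps all of them non-decreasing. Lemmas \ref{lemma:guess} and \ref{lemma:guess2} then translate the forced decrease into a guess about a non-empty set of speed pairs. The payoff claims come from the warning-state mechanism and the definition of $\mathcal{M}$.

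\textbf{Case $\mathbf{x}(0)\in\mathcal{Q}_0$.} Substitute $\mathbf{u}_{A_i}=v_i^{\text{pub}}\boldsymbol\eta_{A_i}(\mathbf{z}_i,\mu_i)$ into Lemma \ref{lemma:rate_mu_k}.

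\emph{Step 1.} On the barrier direction, $\boldsymbol\eta_{A_i}\cdot\mathbf{C}_{\mu,i2}$ and $\boldsymbol\eta_{D}\cdot\mathbf{C}_{\mu,i3}$ are tied together. With $\mathbf{u}_{A_i}=\mu_i\boldsymbol\eta_{A_i}$ and $\mathbf{u}_D=\boldsymbol\eta_D$ one has $\dot\mu_i=0$, since this is the saddle on $\mathcal{B}_1$. Hence $\dot\mu_i\ge0$ requires
\[
\mathbf{u}_D\cdot\langle-\mathbf{C}_{\mu,i3}\rangle\le \tfrac{v_i^{\text{pub}}}{\mu_i}\,\boldsymbol\eta_D\cdot\langle-\mathbf{C}_{\mu,i3}\rangle,
\]
up to orientation. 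I would identify $\langle-\mathbf{C}_{\mu,i3}\rangle$ with $\boldsymbol\eta_D(\mathbf{z}_i,\mu_i)$ through the Apollonius-circle geometry. This gives the necessary condition $\mathbf{u}_D\cdot\boldsymbol\eta_D(\mathbf{z}_i,\mu_i)\ge v_i^{\text{pub}}/\mu_i=\cos a_i$.

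\emph{Step 2.} The defining inequality of $\mathcal{Q}_0$ together with Lemma \ref{lemma:inequality_vector} shows that both inequalities cannot hold. So some $\mu_i$ strictly decreases on a short interval, and Lemma \ref{lemma:guess} yields a guess excluding the non-empty set $\mathcal{S}_{\mu_i}$. That set lies outside $(\mu_1(0),1]\times(\mu_2(0),1]$, because $\nu_i\le\mu_i(0)$.

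\textbf{Cases $\mathcal{Q}_1$ and $\mathcal{Q}_2$.} These are handled the same way, using Lemma \ref{lemma:rate_lambda_i} instead. Fix the $\nu_1$ supplied by the definition of $\mathcal{Q}_1$, where $h=h_1$ near $\nu_1$. A non-decreasing $h(\nu_1)$ requires $\mathbf{u}_D\cdot\langle\mathbf{C}_{\lambda,14}\rangle\ge\sigma_1(\nu_1)$. Lemma \ref{lemma:rate_lambda_i} identifies $\langle\mathbf{C}_{\lambda,14}\rangle=\boldsymbol\zeta_D^{1,l}$. If $m_1>1$, I would use the directional derivative of a min, which gives a condition for every $l$; this is why the definition of $\mathcal{Q}_1$ quantifies over all $l$. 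Combining this with the $\mu_2$ constraint $\mathbf{u}_D\cdot\boldsymbol\eta_D(\mathbf{z}_2,\mu_2)\ge v_2^{\text{pub}}/\mu_2$ and applying Lemma \ref{lemma:inequality_vector} with $\cos b=\sigma_1$ forces either $\mu_2$ or $h(\nu_1)$ to decrease. The conclusion then follows from Lemma \ref{lemma:guess} or Lemma \ref{lemma:guess2}. The condition $(\nu_1,h(\nu_1))\succ(v_1^{\text{pub}},v_2^{\text{pub}})$ guarantees that the excluded set contains admissible speeds.

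\textbf{Payoff and risk claims.} Suppose the guess is wrong, so the true $(v_1^{\text{max}},v_2^{\text{max}})$ lies in the excluded set. Then, by the strict inequality in Definition \ref{definition:guess}, $J^*$ evaluated at the true speeds drops immediately. The Attackers switch to their full-speed optimal strategies (Lemma \ref{lemma:h_property} and the 2A1D optimal strategies) and secure this lower value. If the guess is correct, $J^*$ at the true speeds is unchanged, because the Defender only sacrificed payoff for speed pairs that do not occur. For the no-risk claim, note that the slow strategy is used only while $\mathbf{x}\in(\mathcal{C}\cap\mathcal{M})\setminus\mathcal{W}$. Continuity of $\hat J_i$ (Lemma \ref{lemma_L_decrease}) implies that $\max_i\hat J_i$ must pass through $[-\varepsilon,0)$ before reaching $0$. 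At that point the Attackers reveal their speed and play (10)--(11), which by Lemma \ref{lemma:h_property} keeps $J^*\le1$. When $J^*=2$ the payoff cannot increase at all.

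\textbf{Main obstacle.} I expect the hardest part to be Step 1: relating $\mathbf{C}_{\mu,i3}$ and $\mathbf{C}_{\lambda,i4}$ to the directions $\boldsymbol\eta_D$ and $\boldsymbol\zeta_D$, and normalizing so that the thresholds become exactly $v_i^{\text{pub}}/\mu_i$ and $\sigma_i$. I would first verify the identity $\mu_i\boldsymbol\eta_{A_i}\cdot\mathbf{C}_{\mu,i2}=-\boldsymbol\eta_D\cdot\mathbf{C}_{\mu,i3}=\|\mathbf{C}_{\mu,i3}\|$ directly from the explicit formulas.
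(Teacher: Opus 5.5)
Your forcing argument matches the paper's proof. You substitute the slow-speed strategy into Lemmas~\ref{lemma:rate_mu_k} and~\ref{lemma:rate_lambda_i} to get the thresholds $v_i^{\text{pub}}/\mu_i$ and $\sigma_1(\nu_1)$ (resp.\ $\sigma_2(\nu_1)$). You then use Lemma~\ref{lemma:inequality_vector} to force a drop in one of $\mu_1$, $\mu_2$, $h(\nu_1)$, and read off the excluded sets from Lemmas~\ref{lemma:guess} and~\ref{lemma:guess2}.

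Fix the sign in Step 1, though. $\mathbf{C}_{\mu,i3}$ and $\boldsymbol\eta_D(\mathbf{z}_i,\mu_i)$ are both negative multiples of $[0,\mu_i]+\langle\mathbf{x}_D-\mathbf{x}_{A_i}\rangle$, so $\langle\mathbf{C}_{\mu,i3}\rangle=\boldsymbol\eta_D$, not $\langle-\mathbf{C}_{\mu,i3}\rangle$. Since $\mu_i\|\mathbf{C}_{\mu,i2}\|=\|\mathbf{C}_{\mu,i3}\|$, the identity to verify is $\mu_i\boldsymbol\eta_{A_i}\cdot\mathbf{C}_{\mu,i2}=-\boldsymbol\eta_D\cdot\mathbf{C}_{\mu,i3}=-\|\mathbf{C}_{\mu,i3}\|$. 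This yields $\dot\mu_i=C_{\mu,i1}\|\mathbf{C}_{\mu,i3}\|\bigl(\mathbf{u}_D\cdot\boldsymbol\eta_D(\mathbf{z}_i,\mu_i)-v_i^{\text{pub}}/\mu_i\bigr)$. With your signs taken literally, the inequality would come out reversed and Lemma~\ref{lemma:inequality_vector} would not apply. The corrected version also matches $\langle\mathbf{C}_{\lambda,14}\rangle=\boldsymbol\zeta_D^{1,l}$, which you use correctly in the $\mathcal{Q}_1$ case.

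Where you genuinely differ is the payoff part. The paper defines the exit time $t_d=\inf\{\tau\ge0\,|\,\mathbf{x}(\tau)\notin(\mathcal{C}\cap\mathcal{M})\setminus\mathcal{W}\}$, notes it is finite, and classifies four ways the deception phase can end:
\begin{itemize}
\item $\mu_i$ falls to $v_i^{\text{max}}$ while the other Attacker is already above its critical speed: an incorrect exclusion from $\mathcal{S}_{\mu_i}$, payoff $1\to0$.
\item $\max_i\hat J_i$ crosses $0$ downward: an incorrect exclusion from $\mathcal{S}_{h,v_1^{\text{max}}}$, payoff $2\to1$.
\item $\max_i\hat J_i$ rises to $-\varepsilon$: only correct exclusions.
\item $\mu_i$ reaches $v_i^{\text{pub}}$: only correct exclusions.
\end{itemize}
You argue generically instead. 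By Definition~\ref{definition:guess}, an incorrect guess is a strict drop of $J^*$ at the true speeds, which the Attackers lock in by switching to full-speed play. The no-risk and no-benefit claims then follow from $J^*\ge1$ on $\mathcal{M}$ and from the fact that $\max_i\hat J_i$ must pass through the warning band $[-\varepsilon,0)$ before $J^*$ could reach $2$.

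Your route is shorter and avoids enumerating exit events. The paper's route additionally identifies which event ends the deception, gives the exact payoff change in each scenario, and makes explicit, as $t_d$, the moment the statement calls ``before the Defender can definitively determine its optimal capture order.''
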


\begin{proof}
        Since $\mathbf{x}(0)\in(\mathcal{C}\cap\mathcal{M})\setminus\mathcal{W}$, we have $\max\{\hat J_1(\mathbf{x}(0),v_1^{\text{max}},v_2^{\text{max}}),\hat J_2(\mathbf{x}(0),v_2^{\text{max}},v_1^{\text{max}})\}\notin[-\varepsilon,0)$, and $v_i^{\text{max}}\in[v_i^{\text{pub}},\mu_i(0)]$.
        Both Attackers are removed in finite time if uncaptured, thus $\mathbf{x}\notin(\mathcal{C}\cap\mathcal{M})\cap\mathcal{W}$ holds at finite time $t_d\triangleq\inf\{\tau\ge0|\mathbf{x}(\tau)\notin(\mathcal{C}\cap\mathcal{M})\setminus\mathcal{W}\}$.
        
        Based on \eqref{rate_mu_k_1}, 
        $\dot{\mu}_i(t)<0$ occurs when $\mathbf{u}_D(t)\cdot\boldsymbol\eta_D(\mathbf{z}_i,\mu_i)<\frac{v_i^{\text{pub}}}{\mu_i}$. By Lemma \ref{lemma:inequality_vector}, at least one inequality $\dot{\mu}_1(t)<0$ or $\dot{\mu}_2(t)<0$ holds when $\mathbf{x}(t)\in\mathcal{Q}_0$. 
        
        For $\mathbf{x}(t)\in\mathcal{Q}_1$ with $\nu_1\in(h_2(\mu_2),\mu_1)$ satisfying $(\nu_1,h(\nu_1))>(v_1^{\text{pub}},v_2^{\text{pub}})$ and $h_1(\nu_1)>h_2^{-1}(\nu_1)$, condition $\dot{h}(t^+)=\dot{h}_1(t^+)<0$ requires $\mathbf{u}_D(t)\cdot\boldsymbol\zeta^{1,l}_D(\mathbf{x},\nu_1,h(\nu_1))<\sigma_1(\nu_1)$. Thus, at least one of  $\dot{\mu}_i(t)<0$ and $\dot{h}(t^+)<0$ must hold. The same applies for $\mathcal{Q}_2$.

        Since $\mathbf{x}(0)\in\mathcal{Q}$, at least one condition from \eqref{mu_i_decrease} for $i=1,2$ or \eqref{h_decrease} is satisfied over $[0, t_d]$ regardless of the Defender's strategy.  Consequently, four cases arise for the first time at $t_d$.

        Case 1: $v_i^{\text{max}}=\mu_i(t_d)>\mu_i(t^\prime_d),v_j^{\text{max}}>\mu_j(t_d), \exists i=1,2,j=\{1,2\}\setminus i$, where $t_d^\prime\in(t_d,t_d+\delta_1)$, $\delta_{1}>0$. The Attackers lose motivation from $t_d$. Consequently, the Defender incorrectly excludes the speed pair $(v_i^{\text{max}},v_j^{\text{max}})\in\mathcal{S}_{\mu_i}(\mathbf{x}_{[0,t_d+\delta_1]})$ at $t=t_d$, reducing payoff from 1 to 0.
        
        Case 2: $\max\{\hat J_1(\mathbf{x}(t_d),v_1^{\text{max}},v_2^{\text{max}}),\hat J_2(\mathbf{x}(t_d),v_2^{\text{max}},v_1^{\text{max}})\}=0>\max\{\hat J_1(\mathbf{x}(t^\prime_d),v_1^{\text{max}},v_2^{\text{max}}),\hat J_2(\mathbf{x}(t^\prime_d),v_2^{\text{max}},v_1^{\text{max}})\}$, where $t_d^\prime\in(t_d,t_d+\delta_2)$, $\delta_{2}>0$. This indicates that $h(\mathbf{x}(t^\prime_d),v_1^{\text{max}})<v_2^{\text{max}}=h(\mathbf{x}(t_d),v_1^{\text{max}})<h(\mathbf{x}(0),v_1^{\text{max}})$. Here, the Attackers enter the warning condition \eqref{warning_condition} from $t_d$.
        Thus, the Defender incorrectly excludes the speed pair $(v_1^{\text{max}},v_2^{\text{max}})\in\mathcal{S}_{h,v_1^{\text{max}}}(\mathbf{x}_{[0,t_d+\delta_2]})$ at $t_d$, resulting in a payoff reduction from 2 to 1.
        
        Case 3: $\max\{\hat J_1(\mathbf{x}(t_d),v_1^{\text{max}},v_2^{\text{max}}),\hat J_2(\mathbf{x}(t_d),v_2^{\text{max}},v_1^{\text{max}})\}=-\varepsilon$. Thus, $\max\{\hat J_1(\mathbf{x}(0),v_1^{\text{max}},v_2^{\text{max}}),\hat J_2(\mathbf{x}(0),v_2^{\text{max}},v_1^{\text{max}})\} <- \varepsilon$, which indicates $J^*(\mathbf{x}(0),v_1^{\text{max}},v_2^{\text{max}})=1$. The Attackers are poised to enter the warning condition \eqref{warning_condition} from time $t_d$. Hence, the Defender excludes $\mathcal{E}\triangleq \mathcal{S}_{\mu_1}(\mathbf{x}_{[0,t_d]})\cup \mathcal{S}_{\mu_2}(\mathbf{x}_{[0,t_d]}) \cup \mathcal{S}_{h,\nu_1}(\mathbf{x}_{[0,t_d]}),\exists\nu_1\in(h_2(\mathbf{x}(0),\mu_2(0)),\mu_1(0))$, but makes only correct exclusions since $(v_1^{\text{max}},v_2^{\text{max}})\notin\mathcal{E}$ due to $J^*(\mathbf{x}(t),v_1^{\text{max}},v_2^{\text{max}})=1,\forall t\in[0,t_d]$.

        Case 4: $v_1^{\text{pub}}=\mu_1(t_d)$ or $v_2^{\text{pub}}=\mu_2(t_d)$. In this case, the mismatching order condition is violated at $t_d$. Consequently, the Defender excludes  $\mathcal{E}$ but makes no incorrect guesses until time $t_d$  since $(v_1^{\text{max}},v_2^{\text{max}})\notin\mathcal{E}$ due to $\mathbf{x}(t)\in\mathcal{M}\setminus\mathcal{W}, \forall t\in[0,t_d]$, maintaining the payoff.
\end{proof}

        

\section{Visualizations of Dilemma Conditions}\label{section:visualization}
This section visualizes the states where Attackers can create dilemmas, i.e. $\mathbf{x}\in(\mathcal{Q}\cap\mathcal{C}\cap\mathcal{M})\setminus\mathcal{W}$. For the 6-dimensional 2A1D game, we fix the positions of Attacker $A_1$ and Defender $\mathbf{x}_D$ as $\mathbf{x}_{A_1}=[-2,3]$ and $\mathbf{x}_D=[0,5]$, respectively, while varying Attacker $A_2$'s position. We examine two cases with identical known speeds $v_1^{\text{pub}}=v_2^{\text{pub}}=0.3$ and warning threshold $\varepsilon=0.3$ but different heterogeneous maximum speed values $v_1^{\text{max}},v_2^{\text{max}}$. Each case is analyzed numerically over the region $[-11,11]\times[0.01, 20]\subset\mathbb{R}^2$ using the grid interval of 0.021. Whether $\mathbf{x}\in\mathcal{Q}_i$ for $i=1,2$ is verified by selecting 5 critical speed pairs uniformly. Fig. \ref{fig:dilemma} presents the resulting visualizations. Both cases demonstrate that $(\mathcal{Q}\cap\mathcal{C}\cap\mathcal{M})\setminus\mathcal{W}$ is non-empty and occupies a substantial portion of the state space, confirming ample opportunities for Attackers deception.

\begin{figure}[htbp]
        \centering
        \subfloat[]{\includegraphics[width=0.22\textwidth]{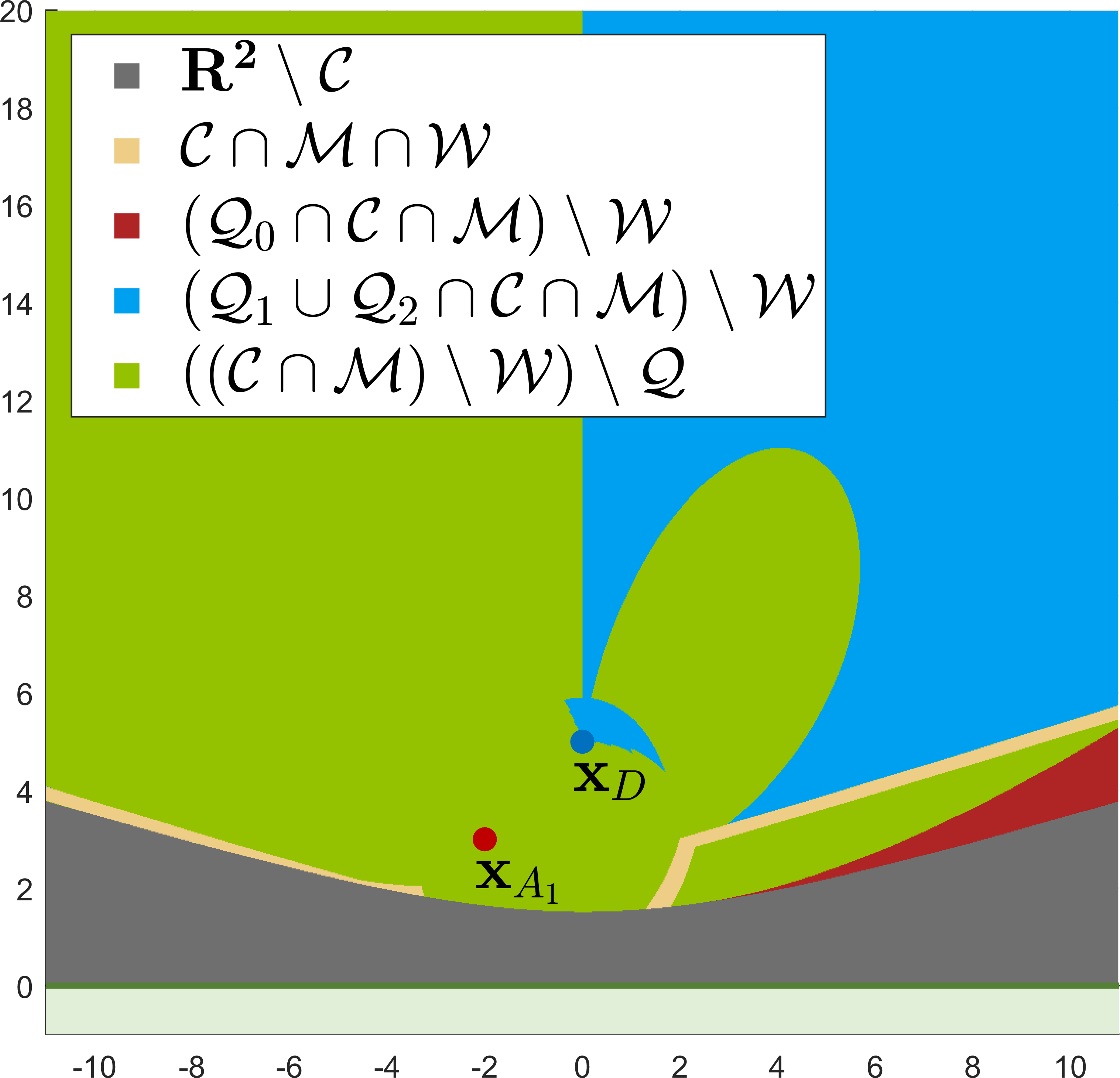}%
        \label{fig:visual1}}
        \hfill
        \subfloat[]{\includegraphics[width=0.22\textwidth]{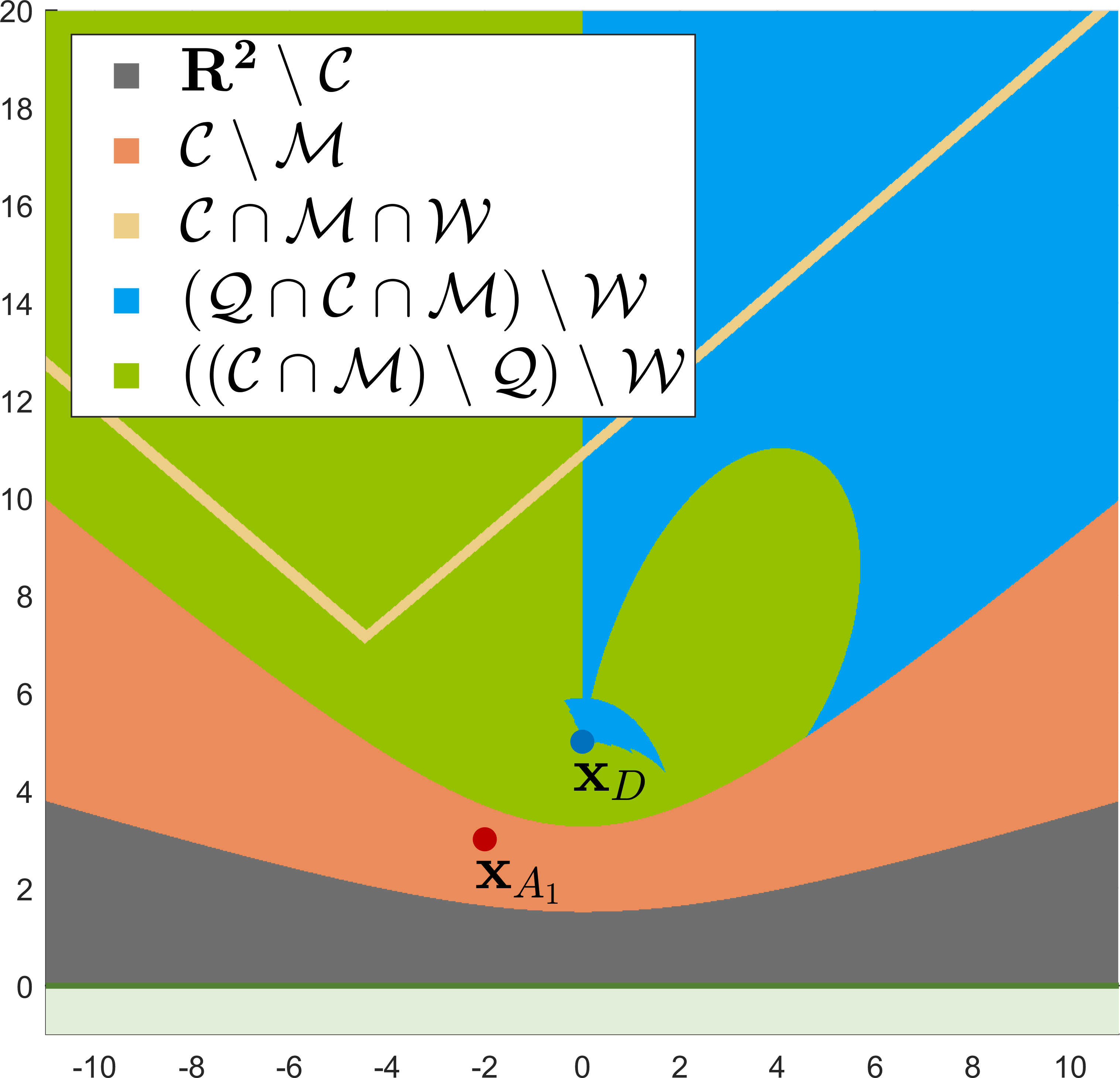}%
        \label{fig:visual2}}
        \caption{The visualizations of the dilemma conditions for the given $\mathbf{x}_{A_1}$ and $\mathbf{x}_{D}$ with varied position of $\mathbf{x}_{A_2}$. (a) Case 1 with $v_1^{\text{max}}=0.3,v_2^{\text{max}}=0.3$, (b) Case 2 with $v_1^{\text{max}}=0.61,v_2^{\text{max}}=0.65$.}
        \label{fig:dilemma}
\end{figure}

\section{Conclusion}
In this letter, we propose a critical speed pair framework to address the 2A1D reach-avoid game under incomplete information, where the Attackers' heterogeneous maximum speeds are publicly known to lie within continuous ranges.
Through examination of mismatching order states, motivation states, warning states, and forcing states, we identify the conditions under which uncertainty about the Attackers' relative capability configuration creates strategic dilemmas for the Defender.
The results demonstrate that Attackers can effectively employ the proposed slow-speed information-limiting strategy to compel the Defender to exclude certain speed pair possibilities, thereby reducing the Defender's payoff when an incorrect guess is made, without introducing additional risk to the Attackers. 
Unlike the homogeneous speed setting, heterogeneity allows the Attackers to conceal their relative capability configuration, thereby providing an additional mechanism for exploiting information advantages.
Future research directions include extending the deception analysis to multi-player scenarios in more complex environments.

\bibliographystyle{IEEEtran}
\bibliography{biblio}

\end{document}